\newtheorem{theorem}{Theorem}
\newtheorem{corollary}{Corollary}
\newtheorem{lemma}{Lemma}
\newtheorem{example}{Example}
\newtheorem{remark}{Remark}
\newenvironment{proof}{Proof.}{}
\def\qed{\rule{0.4em}{1.4ex}}
 \newcommand{\set}[1]{\{#1\}}
\newcommand{\winval}[1]{\langle \! \langle #1 \rangle\! \rangle_{\mathit{val}} }
\newcommand{\va}{\winval{1}}
\newcommand{\vb}{\winval{2}}
\newcommand{\outcome}{\mathrm{Outcome}}
\newcommand{\Prb}{\mathrm{Pr}}
\newcommand{\trans}{\delta}
\newcommand{\dist}{{\cal D}}
\newcommand{\distr}{{\cal D}}
\newcommand{\calc}{{\cal C}}
\newcommand{\calg}{{\cal G}}
\newcommand{\IC}{{\sf PaC}}
\newcommand{\OC}{{\sf OsC}}
\newcommand{\PC}{{\sf CoC}}
\newcommand{\IT}{{\sf PaT}}
\newcommand{\OT}{{\sf OsT}}
\newcommand{\PT}{{\sf CoT}}
\newcommand{\I}{{\sf Pa}}
\newcommand{\Os}{{\sf Os}}
\renewcommand{\P}{{\sf Co}}
\newcommand{\C}{{\sf C}}
\newcommand{\T}{{\sf T}}
\newcommand{\slopefrac}[2]{\leavevmode\kern.1em
  \raise .5ex\hbox{\the\scriptfont0 #1}\kern-.1em
  /\kern-.15em\lower .25ex\hbox{\the\scriptfont0 #2}}
\newcommand{\half}{\slopefrac{1}{2}}
\newcommand{\straa}{\sigma} \newcommand{\Straa}{\Sigma}
\newcommand{\strab}{\pi} \newcommand{\Strab}{\Pi}
\newcommand{\nats}{\mathbb{N}} \newcommand{\Nats}{\mathbb{N}}
\newcommand{\reals}{\mathbb{R}} 
\newcommand{\nat}{\mathbb N}
\newcommand{\real}{{\mathbb R}}
\newcommand{\Inf}{\mathrm{Inf}} 
\newcommand{\ov}[1]{#1'}
\newcommand{\Pref}{{\sf Prefs}}
\newcommand{\Play}{{\sf Plays}}
\newcommand{\POMDP}{{\sf POMDP}}
\newcommand{\MDP}{{\sf MDP}}
\newcommand{\tuple}[1]{\langle #1 \rangle}
\newcommand{\Buchi}{\mathsf{Buchi}}
\newcommand{\coBuchi}{\mathsf{coBuchi}}
\newcommand{\Parity}{\mathsf{Parity}}
\newcommand{\Outcome}{{\mathsf{Outcome}}}
\newcommand{\target}{{\cal T}}
\newcommand{\Obs}{{\cal{O}}}
\newcommand{\obs}{\mathsf{obs}}
\newcommand{\Succ}{\mathsf{Succ}}
\newcommand{\sink}{\mathsf{sink}}
\begin{document}

\title{Randomness for Free\tnoteref{t1,t2}}

\tnotetext[t1]{A preliminary version of this paper appeared in the 
\emph{Proceedings of the 35th International Symposium on Mathematical 
Foundations of Computer Science} (MFCS), 
Lecture Notes in Computer Science 6281, Springer, 2010, pp. 246-257.}

\tnotetext[t2]{
This research was partly supported by Austrian Science Fund (FWF) 
Grant No P 23499- N23, FWF NFN Grant No S11407-N23 and  S11402-N23 (RiSE), 
ERC Start grant (279307: Graph Games), Microsoft faculty fellows award, 
the ERC Advanced Grant QUAREM (267989: Quantitative Reactive Modeling), 
European project Cassting (FP7-601148), European project COMBEST, and the European
Network of Excellence ArtistDesign.
}

\author[IST]{Krishnendu Chatterjee}
\ead{Krishnendu.Chatterjee@ist.ac.at}

\author[LSV]{Laurent~Doyen\corref{cor}}
\ead{doyen@lsv.ens-cachan.fr}

\author[LaBRI]{Hugo Gimbert}
\ead{hugo.gimbert@labri.fr}

\author[IST]{Thomas~A.~Henzinger}
\ead{tah@ist.ac.at}

\cortext[cor]{Corresponding author: Laurent Doyen, LSV, CNRS UMR 8643 \& ENS Cachan, 61 avenue du Pr\'esident Wilson, 94235  Cachan Cedex, France.}

\address[IST]{IST Austria (Institute of Science and Technology Austria)}

\address[LSV]{LSV, ENS Cachan \& CNRS, France}

\address[LaBRI]{LaBRI \& CNRS, Bordeaux, France}

\begin{abstract}
We consider two-player zero-sum games on finite-state graphs. 
These games can be classified on the basis of the information of the players 
and on the mode of interaction between them. 
On the basis of information the classification is as follows: 
(a)~partial-observation (both players have partial view of the game); 
(b)~one-sided complete-observation (one player has complete observation);
and (c)~complete-observation (both players have complete view of the game).
On the basis of mode of interaction we have the following classification: 
(a)~concurrent (players interact simultaneously); and (b)~turn-based 
(players interact in turn).
The two sources of randomness in these games are randomness in the transition 
function and randomness in the strategies.
In general, randomized strategies are more powerful than deterministic 
strategies, and probabilistic transitions give more general classes of games.
We present a complete characterization for the classes of games 
where randomness is not helpful in: (a)~the transition function 
(probabilistic transitions can be simulated by deterministic transitions); and
(b)~strategies (pure strategies are as powerful as randomized strategies).
As a consequence of our characterization we obtain new undecidability 
results for these games.
\end{abstract}

\maketitle

\section{Introduction}
\noindent{\bf Games on graphs.} 
Games played on graphs provide the mathematical framework to analyze 
several important problems in computer science as well as mathematics.  
In particular, when the vertices and edges of a graph represent the states 
and transitions of a reactive system, then the synthesis problem
(Church's problem) asks for the construction of a winning strategy in a 
game played on a graph 
\cite{BuchiLandweber69,RamadgeWonham87,PnueliRosner89,McNaughton93}.
Game-theoretic formulations have also proved useful for the 
verification~\cite{AHK02}, refinement~\cite{FairSimulation}, and compatibility 
checking \cite{InterfaceTheories} of reactive systems.    
Games played on graphs are dynamic games that proceed for an infinite number 
of rounds. 
In each round, the players choose moves; the moves, together with
the current state, determine the successor state.
An outcome of the game, called a {\em play}, consists of the infinite
sequence of states that are visited.

\smallskip\noindent{\bf Strategies and objectives.}
A strategy for a player is a recipe that describes how the player chooses a 
move to extend a play.
Strategies can be classified as follows:
(a)~{\em pure} strategies, which always deterministically choose a  
move to extend the play, and (b)~{\em randomized\/} strategies, which may choose at a state a probability
distribution over the available moves.
Objectives are generally Borel-measurable sets~\cite{Kechris}: 
the objective  for a player is a Borel set $B$ in
the Cantor topology on $S^\omega$ (where $S$ is the set of states),
and the player satisfies the objective if the outcome of the game is a 
member of~$B$.
In verification, objectives are usually  {\em $\omega$-regular languages}.
The $\omega$-regular languages generalize the classical regular languages
to infinite strings; they occur in the low levels of the Borel hierarchy 
(they lie in $\Sigma_3\cap\Pi_3$) 
and they form a robust and expressive language for determining payoffs
for commonly used specifications. 

\smallskip\noindent{\bf Classification of games.}
Games played on graphs can be classified according to the 
knowledge of the players about the state of the game, and the way of 
choosing moves.
Accordingly, there are 
(a)~\emph{partial-observation} games, where each player 
only has a partial or incomplete view about the state and the moves 
of the other player; 
(b)~\emph{one-sided complete-observation} games, where one player has 
partial knowledge and the other player has complete knowledge about 
the state and moves of the other player; and 
(c)~\emph{complete-observation} games, where each player has complete 
knowledge of the game.
According to the way of choosing moves, the games on graphs can be  
classified into {\em turn-based\/} and {\em concurrent\/} games. 
In turn-based games, in any given round only one player can choose
among multiple moves; effectively, the set of states
can be partitioned into the states where it is player~1's turn to
play, and the states where it is player~2's turn. 
In concurrent games, both players may have multiple moves available 
at each state, and the players choose their moves simultaneously and
independently.

\smallskip\noindent{\bf Sources of randomness.} There are two sources
of randomness in these games. First is the randomness in the transition 
function: given a current state and moves of the players, the transition 
function defines a probability distribution over the successor states. 
The second source of randomness is the randomness in strategies (when 
the players play randomized strategies). In this work we study when 
randomness can be obtained for \emph{free}; i.e., we study in which classes
of games the probabilistic transitions can be simulated by 
deterministic transitions and the classes of games where pure 
strategies are as powerful as randomized strategies.

\smallskip\noindent{\bf Motivation.} 
The motivation to study this problem is as follows:
(a)~if for a class of games it can be shown that randomness is for free in the
transition function, then all future works related to analysis of computational
complexity, strategy complexity, and algorithmic solutions can focus on the 
simpler class with deterministic transitions (the randomness in transition function
may be essential for modeling appropriate stochastic reactive systems, 
but the analysis can focus on the deterministic subclass); 
(b)~if for a class of games it can be shown that randomness is for free in 
strategies, then all future works related to correctness results 
can focus on the simpler class of pure strategies, and the 
results would follow for the more general class of randomized strategies; 
and (c)~the characterization of randomness for free will allow hardness results
obtained for the more general class of games (such as games with randomness in 
the transition function) to be carried over to simpler class of games (such as games 
with deterministic transitions).

\smallskip\noindent{\bf Contribution.} The contributions of this paper are as follows:
\begin{enumerate} 
\item {\em Randomness for free in the transition function.}
We show that randomness in the transition function can be obtained 
for free for complete-observation concurrent games (and any class that subsumes
complete-observation concurrent games) and for one-sided complete-observation 
turn-based games (and any class that subsumes this class). 
The reduction is polynomial for complete-observation concurrent games, and 
exponential for one-sided complete-observation turn-based games.
It is known that for complete-observation turn-based games, 
a probabilistic transition function cannot be simulated by a deterministic 
transition function (see discussion in Section~\ref{sec:remarks} for details), and 
thus we present a complete characterization when randomness can be obtained 
for free in the transition function.

\item {\em Randomness for free in the strategies.}
We show that randomness in strategies is free for complete-observation  
turn-based games, and for $1$-player partial-observation games (POMDPs). 
For all other classes of games randomized strategies are more powerful than 
pure strategies.
It follows from a result of Martin~\cite{Mar98} that for $1$-player 
complete-observation games with probabilistic transitions (MDPs) pure 
strategies are as powerful as randomized strategies. 
We present a generalization of this result to the case of 
POMDPs.
Our proof is totally different from Martin's proof and 
based on a new derandomization technique of randomized strategies.
  
\item {\em Concurrency for free in games.} 
We show that concurrency is obtained for free with partial-observation,
both for one-sided complete-observation games as well as for general 
partial-observation games (see Section~\ref{sec:concurrency4free}).
It follows that for partial-observation games, future research 
can focus on the simpler model of turn-based games, and
concurrency does not add anything in the presence of partial observation.

\item {\em New undecidability results.}
As a consequence of our characterization of randomness for free,
we obtain new undecidability results.
In particular, using our results and results of 
Baier et al.~\cite{BBG08} we show for one-sided complete-observation 
deterministic games, the problems of almost-sure winning for 
coB\"uchi objectives and positive winning for B\"uchi objectives 
are undecidable.
Thus we obtain the first undecidability result for qualitative analysis 
(almost-sure and positive winning) of one-sided complete-observation 
deterministic games with $\omega$-regular objectives.
\end{enumerate}

\smallskip\noindent{\bf Applications of our results.} 
While we already show that our results allow us to obtain new undecidability 
results, they have also been used to simplify proofs and analysis of 
POMDPs and partial-observation games~\cite{CCHRS11,CC14,CCT13,CD14,GO14} 
(e.g.~\cite[Lemma~21]{CC14} and~\cite[Claim~2. Lemma~5.1]{CD14}) 
as well as extended to other settings such as 
probabilistic automata~\cite{GS14}.

\section{Definitions}
In this section we present the definition of concurrent games of 
partial information and their subclasses, and 
notions of strategies and objectives. Our model of game is 
equivalent to the model of stochastic games with signals~\cite{MSZ94,DBLP:conf/lics/BertrandGG09}
(in stochastic games with signals, the players receive signals which 
represent information about the game, which in our model is represented as 
observations).
A \emph{probability distribution} on a finite set $A$ is a function
$\kappa: A \to [0,1]$ such that $\sum_{a \in A} \kappa(a) = 1$. 
We denote by $\dist(A)$ the set of probability distributions on $A$.

\smallskip\noindent{\bf Concurrent games of partial observation.}
A \emph{concurrent game of partial observation} 
(or simply a \emph{game}) is a tuple 
$G=\tuple{S,A_1,A_2,\trans,\Obs_1,\Obs_2}$ with the following components: 
\begin{enumerate}
\item \emph{(State space).} $S$ is a finite set of states; 
\item \emph{(Actions).} $A_i$ ($i=1,2$) is a finite set of actions for 
player~$i$; 
\item \emph{(Probabilistic transition function).}
$\trans : S \times A_1 \times A_2 \to \dist(S)$ is a concurrent probabilistic 
transition function that given a current state $s$, 
actions~$a_1$ and~$a_2$ for both players gives the transition probability 
$\trans(s,a_1,a_2)(s')$ to the next state~$s'$; 
for the sake of effectiveness, we assume that all probabilities in the transition function are rational;
\item \emph{(Observations).} $\Obs_i \subseteq 2^S$ ($i=1,2$) is a finite set 
of observations for player~$i$ that partition the state space~$S$.
These partitions uniquely define functions $\obs_i: S \to \Obs_i$ ($i=1,2$) 
that map each state to its observation (for player~$i$) such that $s \in \obs_i(s)$ for all $s \in S$.
\end{enumerate}

We sometimes relax the assumption that games 
have a finite state space, and we allow the set $S$ of states to be \emph{countable}.
This is useful in the context of game solving, where we get a countable state space
after fixing an arbitrary strategy for one of the players in a game.
In our results we explicitly mention when we consider countable state space
and when we consider finite state space.


\noindent{\bf Special cases.} We consider the following special cases of 
partial-observation concurrent games, obtained either by restrictions 
in the observations, the mode of selection of moves,  
the type of transition function, or the number of players:
\begin{itemize}
\item \emph{(Observation restriction).}
The games with \emph{one-sided complete-observation} are the special case of 
games where $\Obs_1 = \{ \{s\} \mid s \in S \}$ (i.e., player~1 has 
complete observation) or $\Obs_2 = \{ \{s\} \mid s \in S \}$ (player~2 has 
complete observation).
The \emph{games of complete-observation} are the special case of games where 
$\Obs_1 = \Obs_2 = \{ \{s\} \mid s \in S \}$, i.e., every state is visible to 
each player and hence both players have complete observation. 
If a player has complete observation we omit the corresponding observation 
sets from the description of the game.

\item \emph{(Mode of interaction restriction).} 
A \emph{turn-based state} is a state $s$ such that either
$(i)$ $\trans(s,a,b) = \trans(s,a,b')$ for all $a \in A_1$
and all $b,b' \in A_2$ (i.e, the action of player~1 determines the transition function
and hence it can be interpreted as player~1's turn to play), we refer to~$s$ as 
a player-1 state, and we use the notation $\trans(s,a,-)$; 
or $(ii)$ $\trans(s,a,b) = \trans(s,a',b)$ for all $a,a' \in A_1$ and all $b \in A_2$, 
we refer to~$s$ as a player-2 state, and we use the notation $\trans(s,-,b)$.
A state $s$ which is both a player-1 state and a player-2 state is called a \emph{probabilistic state}
(i.e., the transition function is independent of the actions of the players). 
We write $\trans(s,-,-)$ to denote the transition function in~$s$.
The \emph{turn-based games} are the special case of games where 
all states are turn-based.  

\item \emph{(Transition function restriction).} 
The \emph{deterministic games} are the special case of games where 
for all states $s \in S$ and actions $a \in A_1$ and $b \in A_2$,
there exists a state $s' \in S$ such that $\trans(s,a,b)(s') = 1$. 
We refer to such states~$s$ as deterministic states.
For deterministic games, it is often convenient to assume 
that $\trans: S \times A_1 \times A_2 \to S$. 

\item \emph{(Player restriction).} 
The \emph{1\half-player games}, also called \emph{partially observable 
Markov decision processes} (or \POMDP{}s),
are the special case of games where the action set~$A_1$ or~$A_2$ is a singleton. 
Note that 1\half-player games are turn-based. Games without
player restriction are sometimes called 2\half-player games.
\end{itemize}

The 1\half-player games of complete-observation are Markov decision processes 
(or \MDP{}s), and \MDP{}s with all states deterministic 
can be viewed as graphs (and are often called $1$-player games).

\smallskip\noindent{\em Classes of game graphs.} We use the following 
abbreviations (\tablename~\ref{tab:abbreviations-a}): we write \I\ for partial-observation, \Os\ for one-sided 
complete-observation, \P\ for complete-observation, \C\ for concurrent, and \T\ 
for turn-based. 
For example, \PC\ will denote complete-observation concurrent games, and \OT\ 
will denote one-sided complete-observation turn-based games. 
For $\calc \in \set{\I,\Os,\P} \times \set{\C,\T}$, 
we denote by $\calg_\calc$ the set of all $\calc$ games.
Note the following strict inclusions (see also \figurename~\ref{figure-overview}): partial observation (\I) is more general
than one-sided complete-observation (\Os) and \Os\ is more general than complete-observation 
(\P), and concurrent (\C) is more general than turn-based (\T).
We will denote by $\calg_D$ the set of all games with deterministic transition
function. The results we establish in this article are summarized in \figurename~\ref{figure-overview3}.

\smallskip\noindent{\em Plays.}
In concurrent games of partial observation, in each turn, player~$1$ chooses an action $a \in A_1$, 
player~$2$ chooses an action $b \in A_2$, and the successor of the current
state~$s$ is chosen according to the probabilistic transition function $\trans(s,a,b)$.
%
A \emph{play} in a game~$G$ is an infinite sequence $\rho=s_0 \,a_0b_0\, s_1 \,a_1b_1\, s_2 \ldots$ 
such that \mbox{$\trans(s_i,a_i,b_i,s_{i+1}) > 0$} for all $i \geq 0$.
The \emph{prefix up to $s_n$} of the play $\rho$ is denoted by $\rho(n)$. 
The set of plays in $G$ is denoted $\Play(G)$,
and the set of corresponding finite prefixes (or histories) is denoted $\Pref(G)$.
The \emph{observation sequence} of $\rho$ for player~$i$ ($i=1,2$) is the unique 
infinite sequence $\obs_i(\rho)=o_0 \,c_0\, o_1 \,c_1\, o_2\ldots$ 
such that $s_j \in o_j \in O_i$, and $c_j = a_j$ if $i=1$, and $c_j = b_j$ if $i=2$
for all $j \geq 0$.

\begin{table}[t!]
    \centering
    \begin{subfigure}[t]{0.5\textwidth}
	\centering
 	\begin{tabular}{c|c}
	  $\I$  & partial observation \\
	  $\Os${\large \strut} & one-sided complete observation \\
 	  $\P${\large \strut}  & complete observation \\
	\hline
	  $\C${\large \strut}  & concurrent \\
	  $\T$  & turn-based \\
	\hline
	  $D${\large \strut}   & deterministic transition function
	\end{tabular}
        \caption{Classes of games \label{tab:abbreviations-a}}
    \end{subfigure}%
~
    \begin{subfigure}[t]{0.5\textwidth}
        \centering
 	\begin{tabular}{c|c}
	  $\Straa_G$    & all player-$1$ strategies  \\
	  $\Straa_G^O${\large \strut}  & observation-based pl.-$1$ strategies\\
 	  $\Straa_G^P${\large \strut}  & pure player-$1$ strategies \\
	\hline
	  $\Strab_G$    & all player-$2$ strategies \\
	  $\Strab_G^O${\large \strut}  & observation-based pl.-$2$ strategies \\
	  $\Strab_G^P${\large \strut}  & pure player-$2$ strategies \\
	\end{tabular}
        \caption{Classes of strategies in game $G$ \label{tab:abbreviations-b}}
    \end{subfigure}
    \caption{Abbreviations. \label{tab:abbreviations}}
\end{table}

\smallskip\noindent{\em Strategies.}
A \emph{pure strategy} in a game $G$ for player~$1$ is a function $\straa:\Pref(G) \to A_1$. 
A \emph{randomized strategy} in $G$ for player~$1$ is a function $\straa:\Pref(G) \to \dist(A_1)$. 
A (pure or randomized) strategy $\straa$ for player~$1$ is 
\emph{observation-based} if for all prefixes $\rho,\rho' \in \Pref(G)$, 
if $\obs_1(\rho)=\obs_1(\rho')$, then $\straa(\rho)=\straa(\rho')$. 
We omit analogous definitions of strategies for player~$2$.
We denote by $\Straa_G$, $\Straa_G^O$, $\Straa_G^P$, $\Strab_G$, $\Strab_G^O$ and $\Strab_G^P$ 
the set of all player-$1$ strategies in $G$, the set of all observation-based player-$1$ strategies, 
the set of all pure player-$1$ strategies, the set of all player-$2$ strategies in $G$, the set of all 
observation-based player-$2$ strategies, and the set of all pure player-$2$ strategies, 
respectively (\tablename~\ref{tab:abbreviations-b}).
Note that if player~$1$ has complete observation, then $\Straa_G^O=\Straa_G$.
%

\smallskip\noindent{\em Objectives.}
An \emph{objective} for player~$1$ in $G$ is a set $\varphi \subseteq S^\omega$ of 
infinite sequences of states. 
A play $\rho = s_0 \,a_0b_0\, s_1 \,a_1b_1\, s_2 \ldots \in \Play(G)$ 
\emph{satisfies} the objective $\varphi$, denoted $\rho \models \varphi$, if 
$s_0 s_1 s_2 \ldots \in \varphi$.
A Borel objective is a Borel-measurable set in the Cantor topology on $S^\omega$~\cite{Kechris}.
We specifically consider $\omega$-regular objectives 
specified as parity objectives (a canonical form to express all $\omega$-regular 
objectives~\cite{Thomas97}).
For a sequence $\bar{s} = s_0 s_1 s_2 \ldots$ we denote by $\Inf(\bar{s})$ 
the set of states that occur infinitely often in $\bar{s}$, that is, 
$\Inf(\bar{s})=\{ s \in S \mid s_j=s \text{ for infinitely many } j \text{'s} \}$.
For $d \in \Nats$, let $p:S \to \{0,1,\ldots,d\}$ be a 
\emph{priority function}, 
which maps each state to a nonnegative integer priority.
The \emph{parity} objective $\Parity(p)$ requires that the minimum priority 
that occurs infinitely often be even.
Formally, $\Parity(p)=\{\bar{s} \in S^\omega \mid \min\set{ p(s) \mid s \in \Inf(\bar{s})} \text{ is even} \}$.
The B\"uchi and coB\"uchi objectives are the special cases 
of parity objectives with two priorities, for $p: S \to \set{0,1}$ and 
$p: S \to \set{1,2}$ respectively.
We say that an objective $\varphi$ is \emph{visible} for player~$i$ if for all $\rho,\rho' \in \Play(G)$,
if $\rho \models \varphi$ and $\obs_i(\rho)=\obs_i(\rho')$, then $\rho'\models \varphi$.
For example if the priority function maps observations to priorities 
(i.e., $p: \Obs_i \to \set{0,1,\ldots,d}$), then the parity objective is 
visible for player~$i$.

\smallskip\noindent{\em Almost-sure winning, positive winning, and value function.}
An \emph{event} is a measurable subset of $S^{\omega}$, and 
given strategies $\straa$ and $\strab$ for the two players, 
the probabilities of events are uniquely defined~\cite{Var85}. 
For a Borel objective~$\varphi$, we denote by $\Prb_{s}^{\straa,\strab}(\varphi)$ 
the probability that $\varphi$ is satisfied by the play obtained from the starting state $s$ when the 
strategies $\straa$ and $\strab$ are used.
Given a game structure $G$ and a state $s$, an observation-based  
strategy $\straa$ for player~$1$ is 
\emph{almost-sure winning} 
(resp., \emph{positive winning}) for the objective $\varphi$ from $s$ if 
for all observation-based randomized strategies $\strab$ for 
player~$2$, we have $\Prb_{s}^{\straa,\strab}(\varphi)=1$ 
(resp.,  $\Prb_{s}^{\straa,\strab}(\varphi)>0$).

The \emph{value function} $\va^G(\varphi):S \to \reals$ for player~1 and  
objective~$\varphi$ assigns to every state of $G$ the maximal probability 
with which player~1 can guarantee the satisfaction of $\varphi$ 
with an observation-based strategy, against all observation-based 
strategies for player~2. Formally we define
\[
\va^G(\varphi)(s)= \sup_{\straa \in \Straa_G^O} \inf_{\strab \in \Strab_G^O}
\Prb_{s}^{\straa,\strab}(\varphi).
\]

The value of an observation-based strategy $\straa$ for player~1 and  
objective~$\varphi$ in state $s$ is 
$val_1^{\straa}(\varphi)(s) = \inf_{\strab \in \Strab_G^O} \Prb_{s}^{\straa,\strab}(\varphi)$.
Analogously for player~2, define 
$\vb^G(\varphi)(s)= \inf_{\strab \in \Strab_G^O} \sup_{\straa \in \Straa_G^O} 
\Prb_{s}^{\straa,\strab}(\varphi)$
and $val_2^{\strab}(\varphi)(s) = \sup_{\straa \in \Strab_G^O} \Prb_{s}^{\straa,\strab}(\varphi)$.
For $\varepsilon \geq 0$, an observation-based strategy $\straa$ is \emph{$\varepsilon$-optimal}
for $\varphi$ from $s$ if $val_1^{\straa}(\varphi)(s) \geq \va^G(\varphi)(s) - \varepsilon$.
An \emph{optimal} strategy is a $0$-optimal strategy.

\begin{figure}[t]
   \begin{center}
	\hrule
\def\fsize{\normalsize}

\begin{picture}(115,65)(0,0)

{\fsize
\gasset{rdist=1,Nadjust=n}

\node[Nmarks=i,NLangle=0.0](q0)(10,29){{\fsize $s_1$}}
\node[Nmarks=n](q1)(40,44){$s_2$}
\node[Nmarks=n](q2)(40,14){$s'_2$}
\node[Nmarks=n](q3)(75,44){$s_3$}
\node[Nmarks=n](q4)(75,14){$s'_3$}
\node[Nmarks=r](q5)(105,14){$s_4$}

\drawloop[ELside=l,loopCW=y, loopangle=90, loopdiam=6](q5){$(-,-)$}

\drawedge[ELpos=50,ELside=l](q0,q1){$(-,b_1)$}
\drawedge[ELpos=50,ELside=r](q0,q2){$(-,b_2)$}
\drawedge(q1,q3){$(a_1,-)$}
\drawedge[ELside=r](q2,q4){$(a_1,-)$}
\drawedge[ELpos=35, ELside=l, ELdist=.4](q1,q4){$(a_2,-)$}
\drawedge[ELpos=35, ELside=r, ELdist=.4](q2,q3){$(a_2,-)$}
\drawbpedge[ELpos=78, ELside=r, ELdist=1, syo=3, exo=-1](q3,165,20,q0,80,50){$(-,-)$}
\drawedge[ELpos=50](q4,q5){$(-,-)$}

\gasset{Nmr=0,Nframe=y,Nadjust=n,dash={1.5}0,AHnb=0}

\node[Nw=16,Nh=13](c1)(10,29){}
\node[Nw=16,Nh=43](c2)(40,29){}
\node[Nw=16,Nh=43](c3)(75,29){}
\node[Nw=16,Nh=13](c4)(105,14){}

\gasset{Nframe=n}

\node(o1)(10,3){$o_1$}
\node(o2)(40,3){$o_2$}
\node(o3)(75,3){$o_3$}
\node(o4)(105,3){$o_4$}

}
\end{picture}
	\hrule
  \vspace{6pt}
  \caption{A game with one-sided complete observation (Example~\ref{ex:example-one}).  \label{figure-example1}}
   \end{center}
\end{figure}
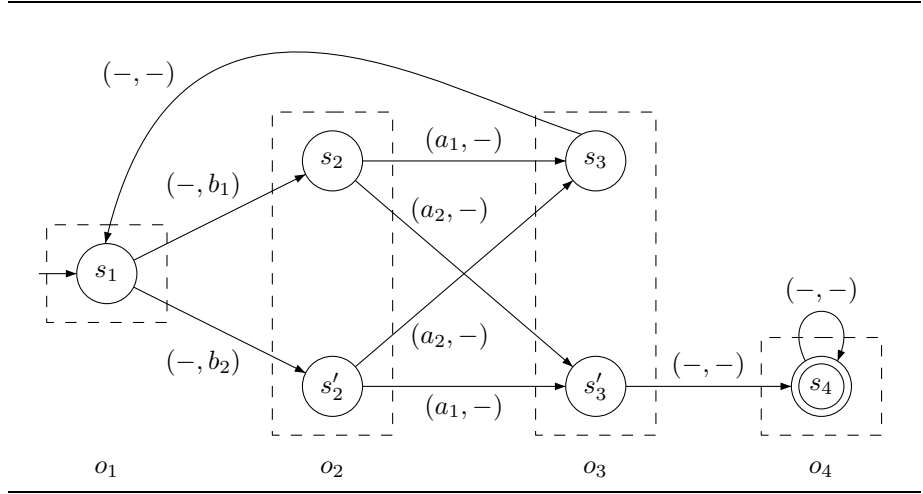

\begin{example}[\cite{CDHR07}]\label{ex:example-one}
Consider the game with one-sided complete observation (player~$2$ has complete information)
shown in \figurename~\ref{figure-example1}. 
Consider the B\"uchi objective defined by the state $s_4$ (i.e., state $s_4$ has priority $0$
and other states have priority $1$). Because player~$1$ has partial observation (given
by the partition $\Obs_1  = \{ \{s_1\},\{s_2,s'_2\},\{s_3,s'_3\},\{s_4\} \}$), she
cannot distinguish between $s_2$ and $s'_2$ and therefore has to play the same actions 
with same probabilities in $s_2$ and $s'_2$ (while it would be easy to win by playing
$a_2$ in $s_2$ and $a_1$ in $s'_2$, this is not possible). In fact, player~$1$
cannot win using a pure observation-based strategy. However, playing $a_1$ and $a_2$ uniformly at 
random in all states is almost-sure winning. 
Every time the game visits observation $o_2$, for any strategy of player~$2$,
the game visits $s_3$ and $s'_3$ with probability $\frac{1}{2}$, and hence
also reaches $s_4$ with probability $\frac{1}{2}$. 
It follows that against all player-$2$ strategies the play 
eventually reaches $s_4$ with probability~1, and then stays there.
\end{example}

\begin{theorem}[\cite{Mar98}]\label{thrm:martin}
Let $G$ be a \PT\ stochastic game (with countable state space $S$) 
with initial state $s$ and an 
objective $\varphi\subseteq S^\omega$.
Then the following equalities hold:
$\va^G(\varphi)(s) =  
\vb^G(\varphi)(s) = 
\sup_{\straa \in \Straa_G^O \cap \Straa_G^P} \inf_{\strab \in \Strab_G^O} \Prb_{s}^{\straa,\strab}(\varphi)$.
\end{theorem}

\smallskip\noindent{\bf Discussion of Theorem~\ref{thrm:martin}.}
Theorem~\ref{thrm:martin} can be derived as a consequence of Martin's proof of 
determinacy of Blackwell games~\cite{Mar98}: the result states that for \PT\ 
stochastic games pure strategies can achieve the same value as randomized
strategies, and as a special case, the result also holds for 
\MDP s (for a detailed discussion how to obtain the result from~\cite{Mar98} 
see~\cite[Lemma~10]{CMJ04}).
Note that Martin's determinacy result of $\va^G(\varphi)(s) =  \vb^G(\varphi)(s)$ also
holds for \PC\ stochastic games (complete-observation concurrent stochastic games), 
but the equality with $\sup_{\straa \in \Straa_G^O \cap \Straa_G^P} \inf_{\strab \in \Strab_G^O} \Prb_{s}^{\straa,\strab}(\varphi)$
(which implies existence of pure $\epsilon$-optimal strategies for $\epsilon>0$) 
only holds for \PT\ stochastic games.

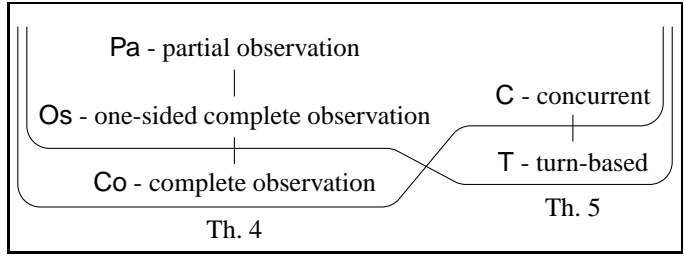
\begin{figure}[!tb]
   \begin{center}
      \unitlength=.6mm

\begin{picture}(149,55)(0,0)
\put(0,0){\framebox(149,55){}}

\gasset{Nw=9,Nh=9,Nmr=4.5,rdist=1,Nadjust=n}

\node[Nframe=n](q3)(50,45){\I\/ - partial observation}
\node[Nframe=n](q2)(50,30){\Os\/ - one-sided complete observation}
\node[Nframe=n](q1)(50,15){\P\/ - complete observation}

\node[Nframe=n](p2)(125,35){\C\/ - concurrent}
\node[Nframe=n](p1)(125,20){\T\/ - turn-based}

\drawedge[AHnb=0](q1,q2){}
\drawedge[AHnb=0](q3,q2){}
\drawedge[AHnb=0](p1,p2){}



\drawline[AHnb=0, arcradius=5](2,50)(2,10)(85,10)(100,28)(145,28)(145,50)
\node[Nframe=n](label)(50,5){Th.~\ref{theo1}}

\drawline[AHnb=0, arcradius=5](4,50)(4,23)(85,23)(100,15)(147,15)(147,50)
\node[Nframe=n](label)(125,10){Th.~\ref{thrm:partialobs-reduction}}



\end{picture}

  \caption{Hierarchy of the various classes of game graphs. 
According to Theorem~\ref{theo1} randomness is for free in the
transition function for concurrent games even with complete observation,
and according to Theorem~\ref{thrm:partialobs-reduction} randomness is for free 
in the transition function for one-sided complete observation games even 
if they are turn-based.
For $2\half$-player games, randomness in the
transition function is not for free only in complete-observation turn-based games.
  \label{figure-overview}}
   \end{center}
\end{figure}

\section{Randomness for Free in Transition Function}\label{sec:r4free-transitions}
In this section we present a precise characterization of the classes of 
games where randomness in the transition function can be obtained for \emph{free}:
in other words, we present the precise characterization of classes of 
games with probabilistic transition function that can be reduced to 
the corresponding class with deterministic transition function. 
We present our results as three reductions: (a) the first reduction 
allows us to separate probability from the mode of interaction; 
(b) the second reduction shows how to simulate probability in transition 
function with \PC\ (complete-observation concurrent) deterministic transition function;
and (c) the final reduction shows how to simulate probability in 
transition with \OT\ (one-sided complete-observation turn-based) deterministic 
transition function. We then show that for \PT\ (complete-observation turn-based) games,
randomness in the transition function cannot be obtained for free, and conclude with 
the \emph{concurrency for free} result that \OT\ and \IT\ games 
can simulate \OC\ and \IC\ games respectively.

A \emph{reduction} from a class $\calg$ of games to a class $\calg'$ 
is a mapping that, from a game $G \in \calg$ and an objective $\varphi$ in $G$,
returns a game $G' \in \calg'$ and an objective $\varphi'$ in $G'$, and such that 
the state space $S$ of $G$ is (injectively) mapped to the state space $S'$ of $G'$. 
In all our reductions we have $S \subseteq S'$, and thus 
the state-space mapping is the identity (on $S$). The mapping of objectives
in our reductions is such that $\varphi$ is the projection of $\varphi'$ on $S^{\omega}$. 
It follows that when $\varphi$ is a parity objective defined with at most $d$ priorities, 
then so is $\varphi'$ (and in the sequel, we omit the definition of the priority
function for $\varphi'$), and when $\varphi$ is an objective in the $k$-th level of the Borel 
hierarchy, then so is $\varphi'$.

All our reductions are \emph{local}: they consist of a 
gadget construction and replacement locally at every state.
Additional properties of interest for reductions are as follows:
\begin{itemize}
\item 
A reduction is \emph{almost-sure-preserving} (resp., \emph{positive-preserving}),  
if for all states $s \in S$ in $G$: player~$1$ is almost-sure winning (resp., positive winning) 
in $G$ from~$s$ if and only if player~$1$ is almost-sure winning (resp., positive winning) 
in $G'$ from~$s$.

\item
A reduction is \emph{value-preserving} if $\va^G(\varphi)(s) = \va^{G'}(\varphi')(s)$ for all $s \in S$, 
and \emph{threshold-preserving} if for all $\eta \in \real$, all states $s \in S$, 
and all $\bowtie\,\in\! \{>,\geq\}$:
there exists an observation-based strategy $\straa \in \Straa_G^O$ for player~$1$ in $G$ such that 
$\forall \strab \in \Strab_G^O:  \Prb_{s}^{\straa,\strab}(\varphi) \bowtie \eta$ if and only if 
there exists an observation-based strategy $\straa' \in \Straa_{G'}^O$ for player~$1$ in $G'$ such that 
$\forall \strab' \in \Strab_{G'}^O:  \Prb_{s}^{\straa',\strab'}(\varphi') \bowtie \eta$.

\end{itemize}


\begin{figure}[!tb]
   \begin{center}
      \unitlength=.6mm

\begin{picture}(102,65)(0,0)
\put(0,0){\framebox(102,65){}}

\gasset{Nw=9,Nh=9,Nmr=4.5,rdist=1,Nadjust=n}

\node[Nframe=n](q4)(25,55){\IC\/ $\equiv$ \IT\/}
\node[Nframe=n](q4)(16,55){}
\node[Nframe=n](q3)(25,40){\OC\/ $\equiv$ \OT\/}
\node[Nframe=n](q3)(16,40){}
\node[Nframe=n](q2)(16,25){\PC\/}
\node[Nframe=n](q1)(16,10){\PT\/}

\drawedge[AHnb=0](q1,q2){}
\drawedge[AHnb=0](q2,q3){}
\drawedge[AHnb=0](q3,q4){}

\node[Nframe=n](label)(48,47.5){$\left.\begin{array}{l}\phantom{1}\\[+12pt] \phantom{2}\end{array}\right\}$ Th.~\ref{thrm:concurrency4free}}
\node[Nframe=n](label)(74,39.5){Concurrency for free}

\drawline[AHnb=0, arcradius=4, dash={1}{0}](6,60)(6,18)(25,18)(42,33)(42,60)
\node[Nframe=n](label)(51,21){Th.~\ref{theo1} \& Th.~\ref{thrm:partialobs-reduction}}
\node[Nframe=n](label)(59,13){Randomness for free}





\end{picture}

  \caption{Summary of the results of Section~\ref{sec:r4free-transitions}.   \label{figure-overview3}}
   \end{center}
\end{figure}
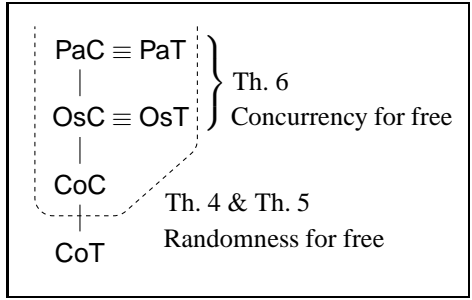

All reductions presented in this paper are threshold-preserving. Note that 
threshold-preserving implies value-preserving, almost-sure-preserving ($\bowtie\: =\: \geq$, $\eta = 1$), 
and positive-preserving ($\bowtie\: =\: >$, $\eta = 0$).

A reduction \emph{restriction-preserving} 
if when $G$ is one-sided complete-observation, then so is $G'$, 
when $G$ is complete-observation, then so is $G'$, and when $G$ is turn-based, then so is $G'$.
We say that a reduction is computable in \emph{polynomial time} (resp., 
in \emph{exponential time}) if the game $G'$ can be constructed 
in polynomial time (resp., in exponential time) from $G$ (assuming a reasonable 
encoding of games, such as explicit lists of binary-encoded states, observations, 
actions, and transitions, and rational probabilities encoded in binary).


An overview of the class of games for which randomness is for free
in the transition function (which we establish in this section) is 
given in \figurename~\ref{figure-overview3}.

\subsection{Separation of probability and interaction}
A concurrent game of partial observation $G$ satisfies the 
\emph{interaction separation} condition if the following restrictions are satisfied
(see also \figurename~\ref{figure-example-inter}):
the state space $S$ can be partitioned into $(S_A,S_P)$ such that 
(1) $\trans: S_A \times A_1 \times A_2 \to S_P$, and 
(2) $\trans: S_P \times A_1 \times A_2 \to \dist(S_A)$ such 
that for all $s \in S_P$ and all $s' \in S_A$, and for all 
$a_1,a_2,a_1',a_2'$ we have 
$\trans(s,a_1,a_2)(s')=\trans(s,a_1',a_2')(s')=\trans(s,-,-)(s')$.
In other words, the choice of actions (or the interaction) of the players 
takes place at states in $S_A$ and actions determine a unique 
successor state in $S_P$, and the transition function at $S_P$ is 
probabilistic and independent of the choice of the players.
In this section, we present a reduction of each class of games to the corresponding 
class satisfying interaction separation, and we present a reduction
to games with uniform transition probabilities.

\smallskip\noindent{\bf Reduction to interaction separation.} 
Let $G=\tuple{S,A_1,A_2,\trans,\Obs_1,\Obs_2}$ be a 
concurrent game of partial observation with an objective $\varphi$.
We obtain a concurrent game of partial observation 
$\ov{G} = \tuple{S_A \cup S_P, A_1, A_2, \ov{\trans}, \ov{\Obs}_1,\ov{\Obs}_2}$ 
where $S_A = S$, $S_P  = S\times A_1 \times A_2$, and:
\begin{itemize}
\item \emph{Observations.} For $i \in \set{1,2}$, if $\Obs_i=\{ \{s\} \mid s \in S \}$, 
then $\ov{\Obs}_i = \{ \{s'\} \mid s' \in S_A \cup S_P \}$; 
otherwise $\ov{\Obs}_i = \{ o \cup o \times A_1 \times A_2 \mid o \in \Obs_i \}$.
\item \emph{Transition function.} The transition function is as follows:
\begin{enumerate}
\item We have the following three cases: 
(a) if $s$ is a player~1 turn-based state, then pick an action $a_2^*$ and 
for all $a_2$ let $\ov{\trans}(s,a_1,a_2)=(s,a_1,a_2^*)$;
(b) if $s$ is a player~2 turn-based state, then pick an action $a_1^*$ and 
for all $a_1$ let $\ov{\trans}(s,a_1,a_2)=(s,a_1^*,a_2)$; and
(c) otherwise, $\ov{\trans}(s,a_1,a_2)=(s,a_1,a_2)$;
\item for all $(s,a_1,a_2) \in S_P$ we have
$\ov{\trans}((s,a_1,a_2),-,-)(s')=\trans(s,a_1,a_2)(s')$.
\end{enumerate}

\item \emph{Objective mapping.} Given the objective $\varphi$ in $G$ we 
obtain the objective $\ov{\varphi}=\set{s_0 s_0' s_1 s_1' \ldots \mid 
s_0 s_1 \ldots \in \varphi }$ in $\ov{G}$.
\end{itemize}
It is easy to map observation-based strategies of the game $G$ to observation-based 
strategies in $\ov{G}$ and vice-versa to preserve satisfaction of $\varphi$ and 
$\ov{\varphi}$ in $G$ and $\ov{G}$, respectively.
Then we have the following theorem.

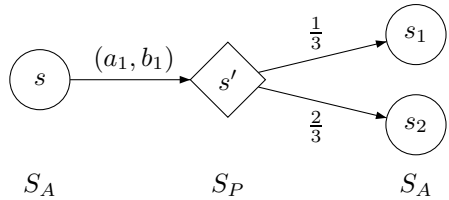
\begin{figure}[t]
   \begin{center}
   \hrule
      \begin{picture}(60,30)(0,0)


\node[Nmarks=n](n0)(5,17){$s$}
\rpnode[Nmarks=n](n1)(30,17)(4,5){$s'$}

\node[Nmarks=n](n3)(55,23){$s_1$}
\node[Nmarks=n](n4)(55,11){$s_2$}

\node[Nframe=n](lab)(5,3){$S_A$}
\node[Nframe=n](lab)(30,3){$S_P$}
\node[Nframe=n](lab)(55,3){$S_A$}


\drawedge[ELpos=50, ELside=l, ELdist=1, curvedepth=0](n0,n1){$(a_1,b_1)$}

\drawedge[ELpos=50, ELside=l, curvedepth=0](n1,n3){$\frac{1}{3}$}
\drawedge[ELpos=50, ELside=r, curvedepth=0](n1,n4){$\frac{2}{3}$}

\end{picture}
   \hrule
  \vspace{6pt}
  \caption{Example of interaction separation for $\trans(s,a_1,b_1)(s_1) = \frac{1}{3}$ 
and $\trans(s,a_1,b_1)(s_2) = \frac{2}{3}$.
  \label{figure-example-inter}}
   \end{center}
\end{figure}

\begin{theorem}\label{theo:separation}
There exists a reduction from the class of partial-observation concurrent games (\IC\/ games)
to the class of \IC\/ games with interaction separation such that this reduction is
\begin{compactenum}
\item threshold-preserving,
\item restriction-preserving, and
\item computable in polynomial time.
\end{compactenum}
\end{theorem}

Since the reduction is restriction-preserving,  
we have a reduction that separates the interaction and probabilistic
transition maintaining the restriction of observation and mode of interaction.

\smallskip\noindent{\bf Uniform-$n$-ary concurrent games.} 
The class of \emph{uniform-$n$-ary games} is the 
special class of games satisfying interaction separation and such 
that for every state $s \in S_P$ the probability $\trans(s,-,-)(s')$ to 
a successor state $s'$ is a multiple of $\frac{1}{n}$.
It follows from the results of~\cite{ZP95} that every \PC\ game 
with rational transition probabilities can be reduced in polynomial time 
to an equivalent polynomial-size uniform-binary (i.e., $n=2$) 
\PC\ game for all parity objectives. The reduction is 
achieved by adding dummy states to simulate the probability, and the 
reduction extends to all objectives (in the reduced game 
we need to consider the objective whose projection in the original 
game gives the original objective).

In the case of partial information, the reduction to uniform-binary games
of~\cite{ZP95} does not work.
To see this, consider \figurename~\ref{fig:uniform-binary} 
where two probabilistic 
states $s_1,s_2$ have the same observation (i.e., $\obs_1(s_1) = \obs_1(s_2)$)
and the outgoing probabilities are $\tuple{\frac{1}{4},\frac{3}{4}}$ from $s_1$
and $\tuple{\frac{1}{3},\frac{2}{3}}$ from $s_2$. The corresponding uniform-binary game 
(given in \figurename~\ref{fig:uniform-binary})
is not equivalent to the original game because the number of steps needed to 
simulate the probabilities is not always the same from $s_1$ and from $s_2$. 
From $s_1$ two steps are always sufficient, while from $s_2$ more than two steps 
may be necessary (with probability $\frac{1}{4}$). 
Therefore with probability $\frac{1}{4}$, player~1 observing more than 2 steps would
infer that the game was for sure in $s_2$, thus artificially improving his knowledge 
and increasing his value function.

\begin{figure}[!tb]
  \begin{center}
    \hrule
    \begin{picture}(100,69)(0,0)


\rpnode[Nmarks=n](n0)(10,52)(4,5){$s_1$}
\node[Nmarks=n](n1)(30,62){$s'_1$}
\node[Nmarks=n](n2)(30,42){$s''_1$}

\drawedge[ELpos=50, ELside=l, curvedepth=0](n0,n1){$\frac{1}{4}$}
\drawedge[ELpos=50, ELside=r, curvedepth=0](n0,n2){$\frac{3}{4}$}


\rpnode[Nmarks=n](n0)(10,17)(4,5){$s_2$}
\node[Nmarks=n](n1)(30,27){$s'_2$}
\node[Nmarks=n](n2)(30,7){$s''_2$}

\drawedge[ELpos=50, ELside=l, curvedepth=0](n0,n1){$\frac{1}{3}$}
\drawedge[ELpos=50, ELside=r, curvedepth=0](n0,n2){$\frac{2}{3}$}

\node[Nmarks=n, Nw=12, Nh=47, Nmr=3, dash={1.5}0, ExtNL=y, NLangle=270, NLdist=2](A1)(10,34.5){}

\rpnode[Nmarks=n](n0)(50,52)(4,5){$s_1$}
\rpnode[Nmarks=n](nA)(70,62)(4,5){}
\rpnode[Nmarks=n](nB)(70,42)(4,5){}
\node[Nmarks=n](n1)(90,62){$s'_1$}
\node[Nmarks=n](n2)(90,42){$s''_1$}

\drawedge[ELpos=50, ELside=l, ELdist=0.5, curvedepth=0](n0,nA){$\frac{1}{2}$}
\drawedge[ELpos=50, ELside=r, curvedepth=0](n0,nB){$\frac{1}{2}$}

\drawedge[ELpos=50, ELside=l, curvedepth=0](nA,n1){$\frac{1}{2}$}
\drawedge[ELpos=50, ELside=r, curvedepth=0](nA,n2){$\frac{1}{2}$}
\drawedge[ELpos=50, ELside=r, curvedepth=0](nB,n2){$1$}


\rpnode[Nmarks=n](n0)(50,17)(4,5){$s_2$}
\rpnode[Nmarks=n](nA)(70,27)(4,5){}
\rpnode[Nmarks=n](nB)(70,7)(4,5){}
\node[Nmarks=n](n1)(90,27){$s'_2$}
\node[Nmarks=n](n2)(90,7){$s''_2$}

\drawedge[ELpos=50, ELside=l, curvedepth=0](n0,nA){$\frac{1}{2}$}
\drawedge[ELpos=50, ELside=r, curvedepth=0](n0,nB){$\frac{1}{2}$}

\drawedge[ELpos=50, ELside=l, curvedepth=0](nA,n1){$\frac{1}{2}$}
\drawedge[ELpos=50, ELside=r, curvedepth=0](nA,n2){$\frac{1}{2}$}
\drawedge[ELpos=50, ELside=r, curvedepth=0](nB,n2){$\frac{1}{2}$}
\drawedge[ELpos=72, ELside=l, curvedepth=8](nB,n0){$\frac{1}{2}$}

\node[Nmarks=n, Nw=32, Nh=67, Nmr=3, dash={1.5}0, ExtNL=y, NLangle=270, NLdist=2](A1)(60,34.5){}


\end{picture}
    \hrule
  \vspace{6pt}
    \caption{An example showing why the uniform-binary reduction cannot be used with partial observation. \label{fig:uniform-binary}}
  \end{center}
\end{figure}
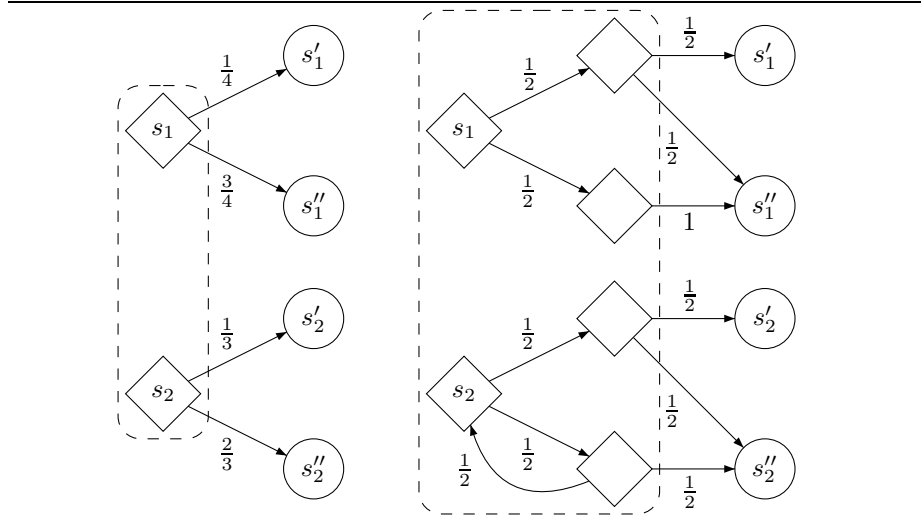

Therefore in the case of a partial-observation game~$G$ satisfying interaction separation, 
we present a reduction to a uniform-$n$-ary game $G'$ where $n = 1/r$ where $r$ is the greatest 
common divisor of all probabilities in the original game~$G$ (a rational~$r$ is a divisor 
of a rational~$p$ if $p=q \cdot r$ for some integer $q$).
Note that the number $n=1/r$ is an integer. We denote by $[n]$ the set $\{0,1,\dots,n-1\}$.
For a probabilistic state $s \in S_P$, we define the $n$-tuple 
$\Succ(s) = \tuple{s'_0,\dots,s'_{n-1}}$ in which each state $s' \in S$
occurs $n \cdot \trans(s,-,-)(s')$ times. Then, we can view the transition 
relation $\trans(s,-,-)$ as a function assigning the same probability $r = 1/n$ 
to each element of $\Succ(s)$ (and then adding up the probabilities of identical elements). 
Hence it is straightforward to obtain a uniform-$n$-ary game $G'$.

\begin{theorem}\label{theo:uniform}
There exists a reduction from the class of \IC\/ games
to the class of uniform-$n$-ary \IC\/ games (where $1/n$ is the greatest 
common divisor of all probabilities in the original game) such that this reduction is
\begin{compactenum}
\item threshold-preserving,
\item restriction-preserving, and
\item computable in exponential time (and in polynomial time for \PC\ games~\cite{ZP95}).
\end{compactenum}
\end{theorem}

Note that the above reduction is worst-case exponential (because so can be the 
inverse of the greatest common divisor of the transition probabilities). 
This is necessary to have the property that 
all probabilistic states in the game have the same number of successors. 
This property is crucial because it determines the number of actions available to player~1 
in the reductions presented in Section~\ref{sec:red1} and~\ref{sec:red2}, 
and the number of available actions should not differ in states that have 
the same observation.

\subsection{Simulating probability by complete-observation concurrent determinism}\label{sec:red1}
In this section, we show that probabilistic states can be simulated by \PC\ 
deterministic gadgets (and hence also by \OC\  and \IC\  deterministic 
gadgets).
By Theorem~\ref{theo:separation} and Theorem~\ref{theo:uniform}, we focus on 
uniform-$n$-ary games.
A probabilistic state with uniform probability over the successors 
is simulated by a complete-observation concurrent 
deterministic state where the optimal strategy for both players is to 
play uniformly over the set of available actions.

\begin{theorem}\label{theo1}
Let $a \in \set{\I,\Os,\P}$ and $b\in \set{\C,\T}$, and let 
$\calc=ab$ and $\calc'=a\C$. 
There exists a reduction from the class of games $\calg_\calc$ to the
class of games $\calg_{\calc'} \cap \calg_D$ (thus with deterministic
transition function) such that this reduction is
\begin{compactenum}
\item threshold-preserving, and
\item computable in polynomial time if $a=\P$, and~in exponential time if $a=\I$ or $a=\Os$.
\end{compactenum}
\end{theorem}

\begin{proof} To prove the result we show that a uniform-$n$-ary 
probabilistic state can be simulated by a \PC\ deterministic gadget.
For simplicity we present the details for the case when $n=2$, 
and the gadget for the general case is presented later.
Our reduction is as follows: we consider a uniform-binary \PC\ game 
such that there is only one probabilistic state, and reduce it to a 
\PC\ deterministic game. For uniform-binary \PC\ games 
with multiple probabilistic states the reduction can be applied to each
state one at a time and we would obtain the desired reduction from 
uniform-binary \PC\ games to \PC\ deterministic games. 
It is easy to see that the reduction can be computed in polynomial time
from uniform-$n$-ary games. The complexity result (item (2) of the theorem)
then follows from Theorem~\ref{theo:separation} and Theorem~\ref{theo:uniform}.

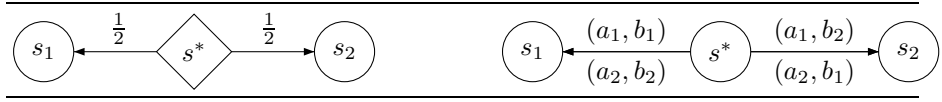
\begin{figure}[!tb]
  \begin{center}
    \hrule
    \begin{picture}(125,12)(0,0)
    \rpnode(A1)(25,6)(4,5){$s^*$}
    \node(A2)(5,6){$s_1$}
    \node(A3)(45,6){$s_2$}
    \drawedge[ELside=r](A1,A2){$\frac{1}{2}$}
    \drawedge[ELside=l](A1,A3){$\frac{1}{2}$}
    
    \node(A4)(95,6){$s^*$}
    \node(A5)(70,6){$s_1$}
    \node(A6)(120,6){$s_2$}
    \drawedge[ELpos=50, ELside=r](A4,A5){$(a_1,b_1)$}
    \drawedge[ELpos=50, ELside=l](A4,A5){$(a_2,b_2)$}
    \drawedge[ELpos=50, ELside=l](A4,A6){$(a_1,b_2)$}
    \drawedge[ELpos=50, ELside=r](A4,A6){$(a_2,b_1)$}
\end{picture}
    \hrule
  \vspace{6pt}
  \caption{The reduction of uniform-binary \PC\ games. \label{fig:red}}
  \end{center}
\end{figure}


The reduction is illustrated in \figurename~\ref{fig:red} and is defined as follows.
Consider a uniform-binary \PC\ game $G$ with a single
probabilistic state $s^*$ with two successors $s_1$ and $s_2$.
Construct the \PC\ deterministic game $G'$ obtained from~$G$ by 
transforming the state $s^*$ to a concurrent deterministic state as follows:
the actions available for player~1 at $s^*$ are $a_1$ and $a_2$, and the 
actions available for player~2 at $s^*$ are $b_1$ and $b_2$; the transition 
function is as follows:
$\trans(s^*,a_1,b_1)=\trans(s^*,a_2,b_2)=s_1$ and 
$\trans(s^*,a_1,b_2)=\trans(s^*,a_2,b_1)=s_2$.
Note that the state space of $G'$ is the same as in $G$, thus $\varphi' = \varphi$.
Then for all objectives $\varphi$, we show that the reduction is 
threshold-preserving as follows.

\begin{enumerate}
\item First assume that there exists an observation-based strategy $\straa$ for player~1 in $G$ such that 
$\forall \strab \in \Strab_G^O:  \Prb_{s}^{\straa,\strab}(\varphi) \bowtie \eta$
for some arbitrary $\eta \in \real$, $s \in S$, and $\bowtie\,\in\! \{>,\geq\}$, and construct
a strategy $\straa'$ for player~1 in $G'$ as follows: 
the strategy $\straa'$ copies
the strategy $\straa$ for all histories other than when the current state is $s^*$,
and if the current state is $s^*$, then the strategy $\straa'$ plays the actions 
$a_1$ and $a_2$ uniformly with probability $\frac{1}{2}$. 
Given the strategy $\straa'$, if the current state is $s^*$, then for any probability
distribution over player~$2$'s actions $b_1$ and $b_2$, the successor states are $s_1$ and $s_2$ 
with probability~$\frac{1}{2}$ (i.e., it plays exactly the role of state $s^*$ in $G$).
It follows that for all strategies $\strab'$ of player~$2$ in $G'$, there is a strategy
$\strab$ in $G$ (that plays like $\strab'$ for all histories in $G$) such that 
$\Prb_{s}^{\straa,\strab}(\varphi) = \Prb_{s}^{\straa',\strab'}(\varphi)$
and thus $\Prb_{s}^{\straa',\strab'}(\varphi) \bowtie \eta$.

\item Second assume that there exists an observation-based strategy $\straa'$ for 
player~1 in $G'$ such that $\forall \strab' \in \Strab_{G'}^O: \Prb_{s}^{\straa',\strab'}(\varphi) \bowtie \eta$
for some arbitrary $\eta \in \real$, $s \in S$, and $\bowtie\,\in\! \{>,\geq\}$, 
and consider the strategy $\straa$ for player~1 in $G$ that plays like $\straa'$ 
for all histories in $G$. Assume towards contradiction that against $\straa$ 
there exists a strategy $\strab \in \Strab_{G}^O$ such that 
$\lnot \Prb_{s}^{\straa,\strab}(\varphi) \bowtie \eta$. Then consider the 
strategy $\strab'$ in $G'$ that  copies the strategy $\strab$ for all histories 
other than when the current state is $s^*$, and if the current state is $s^*$, 
then the strategy $\strab'$ plays the actions $b_1$ and $b_2$ uniformly with 
probability $\frac{1}{2}$. Given the strategy $\strab'$ in $G'$, if the current 
state is $s^*$, then for any probability distribution over player~$1$'s actions 
$a_1$ and $a_2$, the successor states are $s_1$ and $s_2$ with probability~$\frac{1}{2}$ 
(i.e., it plays exactly the role of state $s^*$ in $G$). It follows that 
$\Prb_{s}^{\straa',\strab'}(\varphi) = \Prb_{s}^{\straa,\strab}(\varphi)$
and thus $\lnot \Prb_{s}^{\straa',\strab'}(\varphi) \bowtie \eta$, in contradiction
with the assumption on $\straa'$. Therefore, such a strategy $\strab$ cannot exist,
and we have $\Prb_{s}^{\straa,\strab}(\varphi) \bowtie \eta$ for all $\strab \in \Strab_G^O$,
which concludes the proof that the reduction is threshold-preserving.
\end{enumerate}

\medskip\noindent{\itshape Gadget for uniform-$n$-ary probability reduction.}
We now show how to simulate a probabilistic state $s^*$, with $n$ 
successors $s_0,s_1,\ldots,s_{n-1}$ such that the transition 
probability is $1/n$ to each of the successors, by a concurrent 
deterministic state.
In the concurrent deterministic state $s^*$ there are $n$ actions 
$a_0,a_1,\ldots,a_{n-1}$ available for player~1 and $n$ actions 
$b_0,b_1,\ldots,b_{n-1}$ available for player~2.
The transition function is as follows: for $0 \leq i < n$ and 
$0 \leq j < n$ we have $\trans(s^*,a_i,b_j)=s_{(i+j)\mod n}$.
Intuitively, the transition function matrix is obtained as follows:
the first row is filled with states $s_0,s_1, \ldots, s_{n-1}$, and 
from a row $i$, the row $i+1$ is obtained by moving the state of the
first column of row $i$ to the last column in row $i+1$ and left-shifting
by one position all the other states; the construction is illustrated 
on an example with $n=4$ successors in~(\ref{trans:matrix}). 
The construction ensures that in every row and every column 
each state $s_0,s_1,\ldots,s_{n-1}$ appears exactly once. 
It follows that if player~1 plays all actions uniformly at random, then against 
any probability distribution of player~2 the successor states are 
$s_0,s_1, \ldots, s_{n-1}$ with probability $1/n$ each; 
and a similar result holds if player~2 plays all actions uniformly at random.
The correctness of the reduction for uniform-$n$-ary probabilistic state
is then exactly as for the case of $n=2$. 

\begin{equation}\label{trans:matrix}
\begin{bmatrix}
s_0 & s_1 & s_2 & s_3 \\
s_1 & s_2 & s_3 & s_0 \\
s_2 & s_3 & s_0 & s_1 \\
s_3 & s_0 & s_1 & s_2 
\end{bmatrix}
\end{equation}

The desired result follows.
\qed
\end{proof}

\subsection{Simulating probability by one-sided complete-observation turn-based determinism}\label{sec:red2}

We show that probabilistic states can be simulated by \OT\ (one-sided complete-observation turn-based) 
states, and by Theorem~\ref{theo:separation} we consider games that satisfy interaction separation. 
The reduction is illustrated in \figurename~\ref{fig:gamify}: 
each probabilistic state $s$ is transformed into a player-$2$ state with $n$ successor 
player-$1$ states (where $n$ is chosen such that the probabilities from~$s$ are integer 
multiples of $1/n$, in the example $n=3$).
Because all successors of $s$ have the same observation, player~$1$ has no advantage in playing 
after player~$2$, and because by playing all actions uniformly at random each player
can unilaterally decide to simulate the probabilistic state, the value and 
properties of strategies of the game are preserved. 

\begin{theorem}\label{thrm:partialobs-reduction}
Let $a \in \set{\I,\Os,\P}$ and $b\in \set{\C,\T}$, and 
let $a'=\Os$ if $a = \P$, and $a'=a$ otherwise. 
Let $\calc=ab$ and $\calc'=a'b$. 
There exists a reduction from the class of games $\calg_\calc$ to the
class of games $\calg_{\calc'} \cap \calg_D$ (thus with deterministic
transition function) such that this reduction is
\begin{compactenum}
\item threshold-preserving, and
\item computable in polynomial time if $a=\P$, and~in exponential time if $a=\I$ or $a=\Os$.
\end{compactenum}
\end{theorem}

\begin{proof} 
First, we present the proof for $a \neq \I$, assuming that player~$2$ has complete observation.
A similar construction where player-1 instead of player-2 has complete observation is obtained symmetrically.
Let $G=\tuple{S_A \cup S_P, A_1, A_2, \trans, \Obs_1}$ and assume w.l.o.g. (according 
to Theorem~\ref{theo:separation} and Theorem~\ref{theo:uniform})
that $G$ satisfies interaction separation (i.e., states in $S_A$ are deterministic states, and $S_P$ are
probabilistic states) and $G$ is uniform-$n$-ary, i.e. all probabilities are equal to $\frac{1}{n}$.
For each probabilistic state $s \in S_P$, let $\Succ(s) = \tuple{s'_0,\dots,s'_{n-1}}$ be the $n$-tuple 
of states such that $\trans(s,-,-)(s'_i) = \frac{1}{n}$ for each $1 \leq i \leq n$.

We present a reduction that replaces the probabilistic states in $G$ by a gadget with player-1 and player-2 turn-based states.
From $G$, we construct the one-sided complete-observation game $G'$ where player-2 has complete observation.
The game $G' = \tuple{S',A_1',A_2',\trans',\Obs'_1}$ is defined as follows:
$S' = S \cup (S \times [n]) \cup \{\sink\}$, 
$A_1' = A_1 \cup [n]$,
$A_2' = A_2 \cup [n]$,
$\Obs_1' = \{ o \cup (o \times [n]) \mid o \in \Obs_1 \}$, 
and $\trans'$ is obtained from $\trans$ by applying the following transformation for each state $s \in S$:

\begin{enumerate}
\item if $s$ is a deterministic state in $G$, then $\trans'(s,a,b) = \trans(s,a,b)$ for all $a \in A_1, b \in A_2$, 
and $\trans'(s,i,j) = \sink$ for all $i,j \in [n]$;

\item if $s$ is a probabilistic state in $G$, then $s$ is a player-2 state in $G'$ 
and for all $i,j \in [n]$ we define $\trans'(s,-,i) = (s,i)$ and $\trans'((s,i),j,-) = s'_k$ 
such that $s'_k$ is the element in position $k$ in $\Succ(s)$ with $k= i+j \!\mod n$ 
(and let $\trans'(s,-,b) = \trans'((s,i),a,-) = \trans'(\sink,-,-) = \sink$ 
for all $a \in A_1, b \in A_2$). 

\end{enumerate}

Note that turn-based states in $G$ remain turn-based in $G'$ and the states 
$(s,i)$ are player-1 states with the same observation as~$s$.
As usual, the objective $\varphi'$ is defined as the set of plays in $G'$ whose
projection on $S^{\omega}$ belongs to $\varphi$.

Intuitively, each player in $G'$ has the possibility to ensure exact simulation of the probabilistic states 
of $G$ by playing actions in $[n]$ uniformly at random. For instance, if player~1 does so, then irrespective of the 
(possibly randomized) choice of player~$2$ among the states $(s,1), \dots, (s,n)$, the states in $\Succ(s)$
are reached with probability $1/n$, as in $G$. 
The same property holds if player~$2$ plays the actions in $[n]$ uniformly at random, 
no matter what player~$1$ does.
Therefore, by arguments similar to the proof of Theorem~\ref{theo1},
player~$1$ can ensure the objective $\varphi'$ in $G'$ is satisfied with the same probability 
as $\varphi$ in $G$, against any strategy of player~$2$, and the reduction is threshold-preserving.

The reduction can be easily adapted to the case $a = \I$ of games with 
partial information for both players. Since the construction of $G'$ is polynomial,
the complexity result (item (2) of the theorem) follows from Theorem~\ref{theo:separation} 
and Theorem~\ref{theo:uniform}.
\qed
\end{proof}


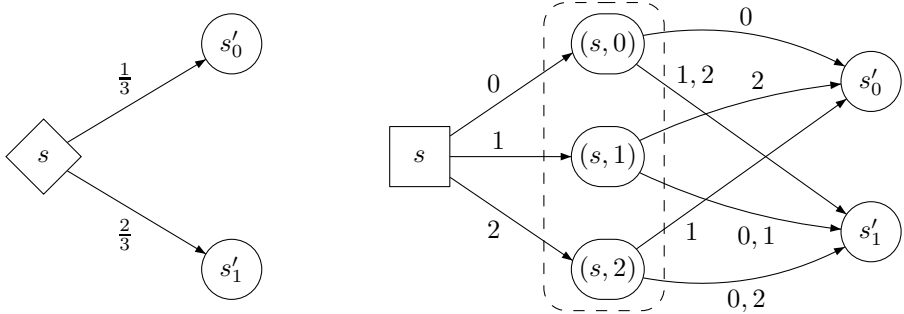
\begin{figure}[!tb]
  \begin{center}
    \hrule
    \begin{picture}(125,50)(0,0)


\rpnode[Nmarks=n](n0)(5,25)(4,5){$s$}
\node[Nmarks=n](n1)(30,40){$s'_0$}
\node[Nmarks=n](n2)(30,10){$s'_1$}


\drawedge[ELpos=50, ELside=l, curvedepth=0](n0,n1){$\frac{1}{3}$}
\drawedge[ELpos=50, ELside=r, curvedepth=0](n0,n2){$\frac{2}{3}$}

\node[Nmarks=n, Nmr=0](n0)(55,25){$s$}
\node[Nmarks=n, Nadjust=w](n1)(80,40){$(s,0)$}
\node[Nmarks=n, Nadjust=w](n2)(80,25){$(s,1)$}
\node[Nmarks=n, Nadjust=w](n3)(80,10){$(s,2)$}

\node[Nmarks=n, Nw=16, Nh=41, Nmr=3, dash={1.5}0, ExtNL=y, NLangle=270, NLdist=2](A1)(79.5,25){}

\node[Nmarks=n](m1)(115,35){$s'_0$}
\node[Nmarks=n](m2)(115,15){$s'_1$}


\drawedge[ELpos=45, ELside=l, eyo=2, curvedepth=0](n0,n1){$0$}
\drawedge[ELpos=42, ELside=l, curvedepth=0](n0,n2){$1$}
\drawedge[ELpos=45, ELside=r, eyo=-2, curvedepth=0](n0,n3){$2$}

\drawedge[ELpos=50, ELside=l, curvedepth=4](n1,m1){$0$}
\drawedge[ELpos=28, ELside=l, ELdist=0.5, curvedepth=0](n1,m2){$1,2$}
\drawedge[ELpos=60, ELside=l, curvedepth=2](n2,m1){$2$}
\drawedge[ELpos=60, ELside=r, curvedepth=-2](n2,m2){$0,1$}
\drawedge[ELpos=28, ELside=r, curvedepth=0](n3,m1){$1$}
\drawedge[ELpos=50, ELside=r, curvedepth=-4](n3,m2){$0,2$}


\end{picture}
    \hrule
  \vspace{6pt}
    \caption{For the probabilistic state $s$ (on the left), we have $\Succ(s) = \tuple{s'_0,s'_1,s'_1}$
	and $n=3$ is the gcd of the probabilities denominators. Therefore, we apply the reduction of Theorem~\ref{thrm:partialobs-reduction}
	to obtain the turn-based game on the right, where $s$ is a player-2 state. \label{fig:gamify}}
  \end{center}
\end{figure}

\subsection{Impossibility Results}\label{sec:remarks}

We have shown that for \PC\ games and \OT\ games, randomness is for free in the
transition function. 
We complete the picture (\figurename~\ref{figure-overview}) by showing that 
for \PT\ (complete-observation turn-based) games, randomness in the transition 
function cannot be obtained for free.

\begin{remark}[Role of probabilistic transition in \PT\ games and 
\POMDP s]\label{rem:tab1}
It follows from the result of Martin~\cite{Mar98} that for all \PT\ deterministic 
games and all objectives, the values are either~1 or~0; however, even
\MDP s with reachability objectives can have values in the interval $[0,1]$ 
(not value~0 and~1 only).
It follows that ``randomness in the transition function" cannot be replaced by 
``randomness in the strategies" in \PT\ deterministic games.
For \POMDP s, we show in Theorem~\ref{thrm:pomdp} that pure strategies are 
sufficient, and it follows that for \POMDP s with deterministic transition 
function the values are~0 or~1, and since \MDP s with reachability objectives
can have values other than~0 and~1 it follows that randomness in the transition 
function cannot be obtained for free for \POMDP s. 
The probabilistic transitions also play an important role in the complexity of 
solving games in case of \PT\ games: for example, \PT\ deterministic 
games with reachability objectives can be solved in linear time, but with 
probabilistic transition function the problem is in NP~$\cap$~coNP and no 
polynomial-time algorithm is known.
In contrast, for \PC\ games we present a polynomial-time reduction from 
probabilistic to deterministic transition function. 
Table~\ref{tab:transitions} summarizes our results characterizing the 
classes of games where randomness in the transition function can be 
obtained for free.
\end{remark}

\newcommand{\ok}{free}
\newcommand{\ko}{not}

\begin{table}[t]
\begin{center}
\scalebox{.88}{
\begin{tabular}{|l|c|c|c|c|c|c|}
\cline{2-6}
\multicolumn{1}{l|}{{\large \strut}}   & \multicolumn{3}{|c|}{$2$\half-player} & \multicolumn{2}{|c|}{$1$\half-player} \\
\cline{2-6}
\multicolumn{1}{l|}{{\large \strut}}   & $\ $complete$\ $ & $\ $one-sided$\ $ & $\ $partial$\ $       & $\ $MDP$\ $ & $\ $POMDP$\ $  \\
\hline
$\ $turn-based$\ ${\large \strut}   & \ko\ (Rmk.~\ref{rem:tab1}) & \ok\ (Th.~\ref{thrm:partialobs-reduction}) & \ok\ (Th.~\ref{thrm:partialobs-reduction})                     & \ko\ (Rmk.~\ref{rem:tab1}) & \ko\ (Rmk.~\ref{rem:tab1})    \\
\hline
$\ $concurrent$\ ${\large \strut}   & \ok\ (Th.~\ref{theo1}) & \ok\ (Th.~\ref{theo1}) & \ok\ (Th.~\ref{theo1})                       & (NA) & (NA)  \\
\hline
\end{tabular}
}
\caption{When randomness is for free in the transition function. In particular, probabilities can be eliminated 
in all classes of 2-player games except complete-observation turn-based games. In the table,
Rmk.~\ref{rem:tab1} refers to Remark~\ref{rem:tab1}, 
Th.~\ref{thrm:partialobs-reduction} refers to Theorem~\ref{thrm:partialobs-reduction}, and 
Th.~\ref{theo1} refers to Theorem~\ref{theo1}. \label{tab:transitions}}
\end{center}
\end{table}

\subsection{Concurrency for free}\label{sec:concurrency4free}
The idea of the reduction in Theorem~\ref{thrm:partialobs-reduction}
can be extended to prove that concurrency is for free in one-sided complete-observation
games, i.e., 
we present a polynomial reduction of \OC\ games to \OT\ games, 
and from \IC\ games to \IT\ games.

\begin{theorem}\label{thrm:concurrency4free}
There exists a reduction from \OC\/ games to \OT\/ games,
and from \IC\/ games to \IT\/ games, such that these reductions
are 
\begin{compactenum}
\item threshold-preserving, and
\item computable in polynomial time.
\end{compactenum}
\end{theorem}

\begin{proof} 
We present the reduction from \OC\/ games to \OT\/ games, for the case where 
player~$1$ has complete information. The reduction for one-sided games where
player~$2$ has complete information is symmetric. Finally, the reduction 
from \IC\/ games to \IT\/ games is obtained analogously.

Let $G=\tuple{S, A_1, A_2, \trans, \Obs_2}$ be a \OC\/ game where 
player~$1$ has complete information, and we construct
a \OT\/ game $G' = \tuple{S',A_1,A_2,\trans',\Obs'_1}$ as follows:
\begin{enumerate}
\item $S' = S \cup (S \times A_1)$, 
\item $\Obs_2' = \{ o \cup (o \times A_1) \mid o \in \Obs_2 \}$,
and 
\item $\trans'$ is defined as follows, for each state $s \in S$ and actions $a \in A_1$, $b\in A_2$:
$\trans'(s,a,-) = (s,a)$ and $\trans'((s,a),-,b) = \trans(s,a,b)$. 
\end{enumerate}

Hence the transition function $\trans'$ lets player~$1$ play first an action $a$, 
then player~$2$ plays an action $b$, and the successor state of $s$ is chosen 
according to the transition relation $\trans(s,a,b)$ from the original game.
As usual, the objective $\varphi'= \{s_0 (s_0,a_0) s_1 (s_1,a_1) \dots \mid 
s_0 s_1 \dots \in \varphi \land \forall i\geq0: a_i \in A_1\}$ in $G'$ 
requires that the projection of a play on $S^{\omega}$ satisfies $\varphi$.
Since player~$1$ plays first in $G'$, player~$1$ can achieve the objective $\varphi'$ 
in $G'$ with at most the same probability as for $\varphi$ in $G$, and since for all 
$s \in S$ and actions $a \in A_1$, the states~$s$ and $(s,a)$ are indistinguishable 
for player~$2$, player~$2$ does not know the last action chosen by player~$1$ and 
therefore does not gain any advantage in playing after player~$1$ rather than 
concurrently. 
Therefore the reduction is threshold-preserving and since it is computable in 
polynomial time, the result follows.
\qed
\end{proof}

\smallskip\noindent{\bf Role of concurrency in complete-observation games.} 
We have shown that concurrency can be obtained for free in partial-observation games (\OT\ and \IT\ games).
In contrast, for complete-observation games, the value is irrational in general for
concurrent games with deterministic transitions (\PC\ deterministic games)~\cite{CH05b}, 
while the value is always rational in turn-based stochastic games with rational probabilities (\PT\ stochastic games)~\cite{CJH04}.
This rules out any value-preserving reduction of \PC\ (deterministic) games to \PT\ (stochastic) games with rational probabilities.


\section{Randomness for Free in Strategies}\label{sec:strategies-pomdp}
\newcommand{\lef}[1]{L(#1)}
\newcommand{\rig}[1]{R(#1)}
\newcommand{\indic}[1]{\mathbf{1}_{#1}}
\newcommand{\esper}[1]{\mathbb{E}_s^\straa\left(#1\right)}
\newcommand{\maps}{f_{s_*,\straa}}
\newcommand{\ouv}[1]{\mathcal{C}(#1)}  

In this section we present our results for randomness for free in strategies.
We start with a remark.

\begin{remark}[Randomness in strategies]\label{rem:tab2}
It is known from the results of~\cite{Eve57} that in \PC\  games 
randomized strategies are more powerful than pure strategies: 
values achieved by pure strategies are lower than values achieved by 
randomized strategies and randomized almost-sure  
winning strategies may exist whereas no pure almost-sure winning strategy 
exists.
Similar results also hold in the case of \OT\ games 
(see~\cite{CDHR07} for an example, also see Example~\ref{ex:example-one}).
By contrast we show that in POMDPs, restricting the set of strategies
to pure strategies does not decrease the value
nor affect the existence of almost-sure and positive winning strategies.
\end{remark}

We start with a lemma, 
and then present our results precisely in Theorem~\ref{thrm:pomdp}. 
The main argument in the proof of Lemma~\ref{lem:mdp} relies
on showing that the value $\Prb_{s}^{\straa}(\varphi)$ of any randomized 
observation-based strategy $\straa$ is equal to the average of the values 
$\Prb_{s}^{\straa_i}(\varphi)$ of (uncountably many)
pure observation-based strategies $\straa_i$. Therefore, one of the 
pure strategies $\straa_i$ has to achieve at least the value of the 
randomized strategy $\straa$.
The theory of integration and Fubini's theorem
make this argument precise. 

\begin{lemma}\label{lem:mdp}
Let $G$ be a \POMDP\/ (with countable state space~$S$), let $s_* \in S$ be an initial
state, and let $\varphi\subseteq S^\omega$ be an objective.
For every randomized observation-based strategy $\straa \in \Straa_G^O$
there exists a \emph{pure} observation-based strategy
$\straa_P \in \Straa_G^P \cap \Straa_G^O$
such that $\Prb_{s_*}^{\straa}(\varphi) \leq
\Prb_{s_*}^{\straa_P}(\varphi).$
\end{lemma}

\begin{proof}
Let $G=\tuple{S,A_1,\trans,\Obs_1}$ be a \POMDP\/ (remember that $A_2$ is a singleton in \POMDP{}s
and therefore $\Obs_2$ is irrelevant), 
let $\straa: \Pref(G) \to \distr(A_1)$ be a randomized observation-based strategy,
and fix $s_*\in S$ an initial state.

To simplify notations,
we suppose that $A_1= \{0,1\}$ contains only two actions,
and that given a state $s\in S$ and an action $a\in\{0,1\}$
there are only two possible successors $\lef{s,a}\in S$
and $\rig{s,a}\in S$ chosen with respective probabilities
$\trans(s,a,\lef{s,a})$ and
$\trans(s,a,\rig{s,a}) = 1 -\trans(s,a,\lef{s,a})$.
The proof for an arbitrary finite set of actions and more than two successors
is essentially the same, with more complicated notations.

There is a natural way to ``derandomize'' the randomized strategy $\straa$.
Fix an infinite sequence $x=(x_n)_{n\in\nat}\in[0,1]^\omega$
and define the pure strategy $\straa_x: \Pref(G) \to A_1$ as follows.
For every play prefix $h = s_0 \,a_1\, s_1 \,a_2\, s_2 \ldots s_n$,
\[
\straa_x(h)=
\begin{cases}
0 & \text{ if } x_n \leq \straa(h)(0)\\
1 & \text{ otherwise.}
\end{cases}
\]
Intuitively, the sequence $x$ fixes in advance the sequence of results of coin
tosses used for playing with $\straa$. 
Note that if $\sigma$ is observation-based, then for every sequence $x$ the strategy $\sigma_x$ 
is both observation-based and pure.

To prove the lemma, we show that
$[0,1]^\omega$ can be equipped with a probability measure $\nu$
such that the mapping $x\mapsto
\Prb_{s_*}^{\straa_x}(\varphi)$ from $[0,1]^\omega$ to $[0,1]$ is measurable,
and:
\begin{equation}\label{eq:mainn}
\Prb_{s_*}^{\straa}(\varphi)=
\int_{x\in[0,1]^\omega}
\Prb_{s_*}^{\straa_x}(\varphi)~  d\nu(x)\enspace.
\end{equation}

Suppose that~\eqref{eq:mainn} holds.
Then there exists $x\in[0,1]^\omega$
(actually many $x$'s) such that $\Prb_{s_*}^{\straa}(\varphi) \leq
\Prb_{s_*}^{\straa_x}(\varphi)$
and since strategy $\straa_x$ is deterministic,
this proves the lemma.

To complete the proof, it is thus enough to construct a probability measure
$\nu$ on $[0,1]^\omega$ such that~\eqref{eq:mainn} holds.

We start with the definition of the probability measure $\nu$.
The set $[0,1]^\omega$ is equipped with the sigma-field generated by
\emph{sequence-cylinders} which are defined as follows.
For every finite sequence $x=x_0,x_1,\ldots,x_n\in[0,1]^*$
the sequence-cylinder $\ouv{x}$
is the subset
$[0,x_0]\times[0,x_1] \times\ldots\times [0,x_n]\times
[0,1]^\omega\subseteq [0,1]^\omega$.
According to Tulcea's theorem~\cite{Billingsley},
there is a unique product probability measure
$\nu$ on $[0,1]^\omega$ such that $\nu(\ouv{\epsilon})=1$ and for every sequence
$x_0,\ldots,x_n,x_{n+1}$ in $[0,1]$,
\[
\nu(\ouv{x_0,\ldots,x_n,x_{n+1}})=x_{n+1}\cdot
\nu(\ouv{x_0,\ldots,x_n})\enspace.
\]

Now that $\nu$ is defined, it remains to prove that the mapping $x\mapsto
\Prb_{s_*}^{\straa_x}(\varphi)$ from $[0,1]^\omega$ to $[0,1]$ is measurable
and that~\eqref{eq:mainn} holds.
For that, we introduce the following mapping:
\[
\maps : [0,1]^\omega \times [0,1]^\omega \to (SA_1)^\omega,
\]
that associates with every pair of sequences
$((x_n)_{n\in\nats},(y_n)_{n\in\nats})$
the infinite history
$h=s_0 \,a_1\, s_1 \,a_2\, \ldots \in(S A_1)^\omega$
defined recursively as follows.
First $s_0=s_*$, and for every $n\in\nats$,
\[
a_{n+1} =
\begin{cases}
0 & \text{if } x_n\leq \straa(s_0 \,a_1\, s_1 \cdots s_n)(0),\\
1 & \text{otherwise.}
\end{cases}
\]
\[
s_{n+1} =
\begin{cases}
\lef{s_n,a_{n+1}} & \text{if } y_n\leq \trans(s_n,a_{n+1},\lef{s_n,a_{n+1})},\\
\rig{s_n,a_{n+1}} & \text{otherwise.}
\end{cases}
\]

Intuitively, $(x_n)_{n\in\nats}$ fixes in advance the coin tosses used by
the strategy, while $(y_n)_{n\in\nats}$ takes care of the coin tosses used by the
probabilistic transitions, and $\maps$ produces the resulting description of
the play.
Thanks to the mapping $\maps$, randomness related to the use of the
randomized strategy $\straa$ is separated from randomness
due to transitions of the game,
which allows to represent the randomized strategy $\straa$
by mean of a probability measure over the set of pure strategies
$\{\straa_x \mid x\in[0,1]^\omega\}$.


We equip both sets $(SA_1)^\omega$ and $[0,1]^\omega\times[0,1]^\omega$
with sigma-fields that make $\maps$ measurable.
First,
$(SA_1)^\omega$
is equipped with the sigma-field
generated by cylinders, defined as follows.
An \emph{action-cylinder} is a subset
$\ouv{h}\subseteq(SA_1)^\omega$ such that $\ouv{h}=h(SA_1)^\omega$ for some $h\in(SA_1)^*$.
A \emph{state-cylinder} is a subset
$\ouv{h}\subseteq(SA_1)^\omega$
such that $\ouv{h}=h(A_1S)^\omega$ for some $h\in(SA_1)^*S$.
The set of \emph{cylinders} is the union of the sets of action-cylinders and
state-cylinders.
Second,
$[0,1]^\omega\times[0,1]^\omega$ is equipped with the sigma-field
generated by products of sequence-cylinders.
Checking that $\maps$ is measurable is an elementary exercise.

\medskip

Now we define two probability measures $\mu$ and $\mu'$ on $(SA_1)^\omega$
and prove that they coincide.
On one hand, the measurable mapping $\maps:
[0,1]^\omega\times[0,1]^\omega\to(SA_1)^\omega$
defines naturally a probability measure $\mu'$ on $(SA_1)^\omega$.
Equip the set $[0,1]^\omega\times[0,1]^\omega$ with the
product measure $\nu\times \nu$.
Then for every measurable subset $B \subseteq(SA_1)^\omega$,
\[
\mu'(B) = (\nu\times\nu)(\maps^{-1}(B))\enspace.
\]
On the other hand, the strategy $\straa$ and the initial state $s_*$
naturally define another probability measure $\mu$ on $(SA_1)^\omega$.
According to Tulcea's theorem~\cite{Billingsley},
there exists a unique product probability measure $\mu$ on $(SA_1)^\omega$ such
that $\mu(\ouv{s_*})=1$, $\mu(\ouv{s})=0$ for $s\in S \setminus\{s_*\}$, 
and for $h=s_0 \,a_1\, s_1 \,a_2\, \cdots s_n\in(SA_1)^*S$ and $(a,t)\in A_1\times S$,
  \begin{align*}
	\mu(\ouv{ha}) &= \mu(\ouv{h}) \cdot \straa(h)(a) \\
	\mu(\ouv{hat}) &= \mu(\ouv{ha}) \cdot \trans(s_n,a,t).
  \end{align*}

To prove that $\mu$ and $\mu'$ coincide,
it is enough to prove that $\mu$ and $\mu'$ coincide on
the set of cylinders,
that is
for
every cylinder $\ouv{h}\subseteq (SA_1)^\omega$,
\begin{equation}\label{eq:egaa}
\mu(\ouv{h}) = (\nu\times\nu)(\maps^{-1}(\ouv{h}))\enspace.
\end{equation}
This is obvious for $h=s_*$ and $h=s\in S\setminus\{s_*\}$. 
The general case goes by induction.
Let $h=s_0 \,a_1\, s_1 \,a_2\, \cdots s_n\in(SA_1)^*S$ and $(a,t)\in A_1\times S$.
Let $I=[0,1]$. Let $I_a = [0,\straa(h)(a)]$ if $a=0$ and $I_a=[\straa(h)(a),1]$
if $a=1$. Let $I_t=[0,\trans(s_n,a,t)]$ if $t=\lef{s_n,a}$
and $I_t=[\trans(s_n,a,t),1]$ if $t=\rig{s_n,a}$.
Then:
\begin{align*}
\mu(\ouv{ha} \mid \ouv{h}) &= \straa(h)(a)\\
& =
(\nu\times\nu)((I\times I)^n(I_a\times I)(I\times I)^\omega)
\\
&=(\nu\times\nu)(\maps^{-1}(\ouv{ha})\mid\maps^{-1}(\ouv{h}))\\
%
\mu(\ouv{hat} \mid \ouv{ha}) &= \trans(s_n,a,t)\\
& =
(\nu\times\nu)((I\times I)^n(I\times I_t)(I\times
I)^\omega)
\\
&=(\nu\times\nu)(\maps^{-1}(\ouv{hat})\mid\maps^{-1}(\ouv{ha}))\enspace,
\end{align*}
which proves that~\eqref{eq:egaa} holds for every cylinder $\ouv{h}$.

\medskip

Now all the tools needed to prove~\eqref{eq:mainn}
have been introduced, and we can state the main relation
between $\maps$ and $\Prb_{s_*}^{\straa}(\varphi)$.
Let
$\varphi'\subseteq(SA_1)^\omega$ be the set of histories $s_0 \,a_1\, s_1 \,a_2\, \ldots$ 
such that $s_0s_1\cdots \in \varphi$, and let $\indic{\varphi}$ and $\indic{\varphi'}$
be the indicator functions of $\varphi$ and $\varphi'$.
Then:
\begin{align}
\notag
\Prb_{s_*}^{\straa}(\varphi)
&=\int_{p\in S^\omega} \indic{\varphi}(p) ~d\Prb_{s_*}^{\straa}(p)
=\int_{p\in(SA_1)^\omega} \indic{\varphi'}(p) ~d\mu(p)
= \int_{p\in(SA_1)^\omega} \indic{\varphi'}(p) ~d\mu'(p)\\
\notag
&= \int_{(x,y)\in [0,1]^\omega\times [0,1]^\omega}
\indic{\varphi'}(\maps(x,y)) ~d(\nu\times\nu)(x,y)\\
&=
\int_{x\in[0,1]^\omega} \left( \int_{y\in[0,1]^\omega} \indic{\varphi'}(\maps(x,y))
~d\nu(y)\right)~d\nu(x)\enspace,
\label{eq:last}
\end{align}
where
the first and second equalities are by definition of
$\Prb_{s_*}^{\straa}(\varphi)$,
the third equality holds because $\mu=\mu'$,
the fourth equality is a basic property of image
measures, and the last equality holds by Fubini's theorem~\cite{Billingsley}
that we can use since $\indic{\varphi'}\circ\maps$ is positive.
\smallskip

\noindent To complete the proof, we show that for every $x\in[0,1]^\omega$,
\begin{equation}\label{eq:sigmax}
\int_{y\in [0,1]^\omega}
\indic{\varphi'}(\maps(x,y)) ~d\nu(y) = \Prb_{s}^{\straa_x}(\varphi),
\end{equation}
Equation~\eqref{eq:last} holds for every observation-based strategy
$\straa$, hence in particular for strategy $\straa_x$.
But strategy $\straa_x$ has the following property:
for every $x'\in\,]0,1[^\omega$ and every $y\in [0,1]^\omega$,
$f_{s_*,\straa_x}(x',y)=f_{s_*,\straa}(x,y)$.
Together with~\eqref{eq:last}, this gives~\eqref{eq:sigmax}.
This completes the proof, since~\eqref{eq:last} and~\eqref{eq:sigmax}
immediately give~\eqref{eq:mainn}.
\qed
\end{proof}

We obtain the following result as a consequence of Lemma~\ref{lem:mdp}.

\begin{theorem}\label{thrm:pomdp}
Let $G$ be a \POMDP\/ (with countable state space~$S$), let $s_* \in S$ be an 
initial state, and let $\varphi\subseteq S^\omega$ be an objective.
Then the following assertions hold:
\begin{enumerate}
\item
$\sup_{\straa \in \Straa_G^O} \Prb_{s_*}^{\straa}(\varphi) =  
\sup_{\straa \in \Straa_G^O \cap \Straa_G^P}  \Prb_{s_*}^{\straa}(\varphi)$. 
\item If there is a randomized optimal (resp., almost-sure winning, positive winning) strategy 
for $\varphi$ from $s_*$, then there is a pure optimal (resp., almost-sure winning, positive winning)
strategy for $\varphi$ from $s_*$. 
\end{enumerate}
\end{theorem}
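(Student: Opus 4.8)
The plan is to read off both assertions directly from Lemma~\ref{lem:mdp}, which already carries the entire technical burden; the theorem merely repackages it. Throughout I use the convention of Lemma~\ref{lem:mdp} that in a \POMDP\ the opponent (Player~$2$) has a trivial action set, so $\va^G(\phi)(s_*) = \sup_{\sigma \in \Straa_O} \Prb_{s_*}^{\sigma}(\phi)$.

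For assertion~(1) I would first note that $\sup_{\sigma \in \Straa_O} \Prb_{s_*}^{\sigma}(\phi) \geq \sup_{\sigma \in \Straa_O \cap \Straa_P} \Prb_{s_*}^{\sigma}(\phi)$ holds trivially, since $\Straa_O \cap \Straa_P \subseteq \Straa_O$. For the converse, fix an arbitrary $\sigma \in \Straa_O$; Lemma~\ref{lem:mdp} supplies a pure observation-based strategy $\sigma_P \in \Straa_P \cap \Straa_O$ with $\Prb_{s_*}^{\sigma}(\phi) \leq \Prb_{s_*}^{\sigma_P}(\phi)$, hence $\Prb_{s_*}^{\sigma}(\phi) \leq \sup_{\sigma' \in \Straa_O \cap \Straa_P} \Prb_{s_*}^{\sigma'}(\phi)$. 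Taking the supremum over $\sigma \in \Straa_O$ yields the reverse inequality, and therefore equality.

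For assertion~(2) I would handle the three modes of winning uniformly, always applying Lemma~\ref{lem:mdp} to the given randomized strategy $\sigma$ to obtain a pure observation-based $\sigma_P$ with $\Prb_{s_*}^{\sigma}(\phi) \leq \Prb_{s_*}^{\sigma_P}(\phi)$. If $\sigma$ is randomized almost-sure winning, then $\Prb_{s_*}^{\sigma}(\phi) = 1$, so $\Prb_{s_*}^{\sigma_P}(\phi) \geq 1$, i.e. $\sigma_P$ is pure almost-sure winning. If $\sigma$ is randomized positive winning, then $\Prb_{s_*}^{\sigma_P}(\phi) \geq \Prb_{s_*}^{\sigma}(\phi) > 0$, so $\sigma_P$ is pure positive winning. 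If $\sigma$ is randomized optimal, then $\va^G(\phi)(s_*) = \Prb_{s_*}^{\sigma}(\phi) \leq \Prb_{s_*}^{\sigma_P}(\phi) \leq \sup_{\sigma' \in \Straa_O} \Prb_{s_*}^{\sigma'}(\phi) = \va^G(\phi)(s_*)$, so the chain collapses to equalities and $\sigma_P$ is pure optimal.

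I do not expect any genuine obstacle here: the whole difficulty (constructing the measure $\nu$ via Tulcea's theorem, the separation map $\maps$, and the Fubini computation) lives inside Lemma~\ref{lem:mdp}. The only point requiring a moment's care is the optimal case of assertion~(2), where one must close the sandwich of inequalities using that the value is \emph{defined} as the supremum over all observation-based strategies, so no pure strategy can exceed it.
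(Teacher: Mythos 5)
Your proposal is correct and matches the paper's approach: the paper states Theorem~\ref{thrm:pomdp} simply as a consequence of Lemma~\ref{lem:mdp}, and your write-up fills in exactly the routine supremum and sandwich arguments that the paper leaves implicit. No gaps.
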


Theorem~\ref{thrm:pomdp} shows that the result of Theorem~\ref{thrm:martin} can be generalized to 
\POMDP s, and a stronger result (item (2) of Theorem~\ref{thrm:pomdp})  
can be proved for \POMDP s (and \MDP s as a special case).
It remains open whether a result similar to item (2) of 
Theorem~\ref{thrm:pomdp} can be proved for \PT\ stochastic games.
Note that it was already shown in~\cite[Example~1]{CMJ04} that in \PT\ stochastic
games with Borel objectives optimal strategies need not exist. 
The results summarizing when randomness can be obtained for free for 
strategies is shown in Table~\ref{tab:strategies}.

\begin{table}[t]
\begin{center}
\scalebox{.88}{
\begin{tabular}{|l|c|c|c|c|c|c|}
\cline{2-6}
\multicolumn{1}{l|}{{\large \strut}}   & \multicolumn{3}{|c|}{$2$\half-player} & \multicolumn{2}{|c|}{$1$\half-player} \\
\cline{2-6}
\multicolumn{1}{l|}{{\large \strut}}   & $\ $complete$\ $ & $\ $one-sided$\ $ & $\ $partial$\ $       & $\ $MDP$\ $ & $\ $POMDP$\ $  \\
\hline
$\ $turn-based$\ ${\large \strut}   & $\epsilon > 0$ (Th.~\ref{thrm:martin}) & \ko\ (Rmk.~\ref{rem:tab2}) & \ko\ (Rmk.~\ref{rem:tab2})              & 
$\epsilon \geq 0$ (Th.~\ref{thrm:pomdp}) & $\epsilon \geq 0$  (Th.~\ref{thrm:pomdp})   \\
\hline
$\ $concurrent$\ ${\large \strut}   & \ko\ (Rmk.~\ref{rem:tab2}) & \ko\ (Rmk.~\ref{rem:tab2}) & \ko\ (Rmk.~\ref{rem:tab2})    & (NA) & (NA)  \\
\hline
\end{tabular}
}
\caption{When pure ($\epsilon$-optimal) strategies are as powerful as randomized strategies. The case $\epsilon = 0$
in complete-observation turn-based games is open.
In the table, Th.~\ref{thrm:martin} refers to Theorem~\ref{thrm:martin},
Rmk.~\ref{rem:tab2} refers to Remark~\ref{rem:tab2}, 
Th.~\ref{thrm:pomdp} refers to Theorem~\ref{thrm:pomdp}.\label{tab:strategies}}
\end{center}
\end{table}

\smallskip\noindent{\bf Undecidability result for \POMDP s.}
The results of~\cite{BBG08} show that the emptiness problem for finite-state 
probabilistic coB\"uchi (resp., B\"uchi) automata under 
the almost-sure (resp., positive) semantics~\cite{BBG08} is undecidable.
As a consequence it follows that for finite-state \POMDP s the problem of deciding if there 
is a pure observation-based almost-sure (resp., positive) winning strategy 
for coB\"uchi (resp., B\"uchi) objectives is undecidable, and as a consequence 
of Theorem~\ref{thrm:pomdp} we obtain an analogous undecidability result for 
randomized strategies. 
The undecidability result holds even if the coB\"uchi (resp., B\"uchi) 
objectives is visible. 

\begin{corollary}\label{coro1}
Let $G$ be a finite-state  \POMDP\/ with initial state $s_*$
and let $\target\subseteq S$ be a subset of states (or union of observations).
Whether there exists a pure or randomized almost-sure winning strategy for player~1 
from $s_*$ in~$G$ for the objective~$\coBuchi(\target)$ is undecidable;
and whether there exists a pure or randomized positive winning strategy for 
player~1 from $s_*$ in~$G$ for the objective~$\Buchi(\target)$ is undecidable.
\end{corollary}

\noindent{\bf Undecidability result for one-sided complete-observation turn-based games.}
The undecidability results of Corollary~\ref{coro1} also holds for finite-state 
\OT\ stochastic games (as they subsume finite-state \POMDP s as a special case).
It follows from Theorem~\ref{thrm:partialobs-reduction} that finite-state \OT\ 
stochastic games can be reduced to finite-state \OT\ deterministic games.
The reduction holds for randomized strategies and thus we obtain the first undecidability 
result for finite-state \OT\ deterministic games (Corollary~\ref{coro2}), solving 
the open question of~\cite{CDHR07}. Note that for pure strategies, \OT\ deterministic 
games with a parity objective are EXPTIME-complete~\cite{Reif84,CDHR07}.

\begin{corollary}\label{coro2}
Let $G$ be a finite-state \OT\ deterministic game 
with initial state $s_*$ and let $\target\subseteq S$ be a subset of states 
(or union of observations).
Whether there exists a randomized almost-sure winning strategy for player~1 from $s_*$ 
in~$G$ for the objective~$\coBuchi(\target)$ is undecidable;
and whether there exists a randomized positive winning strategy for player~1 
from $s_*$ in~$G$ for the objective~$\Buchi(\target)$ is undecidable.
\end{corollary}

\section{Conclusion}
In this work we have presented a precise characterization for classes of
games where randomization can be obtained for free in transition functions
and in strategies.
As a consequence of our characterization we obtain new undecidability 
results. 
The other impact of our characterization is as follows: for the class of
games where randomization is free in transition function, future algorithmic 
and complexity analysis can focus on the simpler class of deterministic 
games; and for the class of games where randomization is free in 
strategies, future analysis of such games can focus on the simpler class
of pure strategies.
Thus our results will be useful tools for simpler analysis techniques 
in the study of games, as already demonstrated in~\cite{CCHRS11,CC14,CCT13,CD14,GO14,GS14}.

Finally, note that it can be expected that randomness would not be for free in both the
transition function and the strategies, and the results of this paper show that 
the classes of games in which randomness is for free in the
transition function (Table~\ref{tab:transitions}) are those in which randomized strategies are more powerful
than pure strategies (Table~\ref{tab:strategies}), i.e. randomness is not for free 
in strategies when randomness is for free in the transition function.

\bibliographystyle{plain}
\bibliography{diss}

\end{document}